\newtheorem{prop}{Proposition}
\newcommand{\oes}{outlying elements\xspace}
\newcommand{\p}{\mathbb{P}}
\begin{document}

\title{Voting Rights, Markov Chains, and Optimization by Short Bursts
}


\author{Sarah Cannon         \and
        Ari Goldbloom-Helzner \and
        Varun Gupta \and
        JN Matthews \and
        Bhushan Suwal
}


\institute{S. Cannon \at
              Claremont McKenna College, 
              Claremont, CA, USA \\
              \email{scannon@cmc.edu}; ORCID: 0000-0001-6510-4669 
           \and
           A. Goldbloom-Helzner \at
              Brown University,
              Providence, RI, USA \\
              \email{ari\_goldblooom-helzner@alumni.brown.edu}
          \and
          V. Gupta \at
              University of Pennsylvania, 
              Philadelphia, PA, USA \\
              \email{vgup@seas.upenn.edu}
         \and
         JN Matthews \at
              MGGG Redistricting Lab, Tisch College, Tufts University, Medford, MA, USA \\
              \email{jenni.matthews@tufts.edu}
         \and
         B. Suwal \at
              Boston University, Boston, MA, USA \\
              \email{bsuwal@bu.edu}
}

\date{Received: date / Accepted: date}

\maketitle

{\bf Abstract. } Finding \oes\ in probability distributions can be a hard problem. 
Taking a real example from Voting Rights Act enforcement, we consider the problem of maximizing the number of simultaneous majority-minority districts in a political districting plan.  
An unbiased random walk on districting plans is unlikely to find plans that approach this maximum. A common search approach is to use a {\bf biased random walk}: preferentially select districting plans with more majority-minority districts.  
Here, we present a third option, called {\bf short bursts}, 
in which an unbiased random walk is performed for a small number of steps (called the {\em burst length}), then re-started from the most extreme plan that was encountered in the last burst.
We give empirical evidence that short-burst runs outperform biased random walks for the problem of maximizing the number of majority-minority districts, and that there are many values of burst length for which we see this improvement. 
Abstracting from our use case, we also consider short bursts where the underlying state space is a line with various probability distributions, and then explore some features of more complicated state spaces and how these impact the effectiveness of short bursts. 

\keywords{voting rights, Markov chain, random walk, optimization, redistricting}


{\bf Acknowledgements} This work began with an idea from Zachary Schutzman, and was explored at the 2019 Voting Rights Data Institute run by the Metric Geometry and Gerrymandering Group. The authors thank the Prof Amar J Bose Research Foundation (MIT) and the Jonathan M Tisch College of Civic Life (Tufts) for their support of the VRDI.  The authors acknowledge valuable contributions from Susy Tovar, Benjamin Gramza, and Moon Duchin. S. Cannon is supported in part by NSF awards DMS-1803325 and CCF-2104795. 
\newpage
\section{Introduction}

The problem of finding small-probability \oes in a large discrete state space is a challenging one. As a motivating example, we consider the problem of finding a political districting plan for a given state or region with a large number of districts that have a majority of their population from a given minority group, such as African-Americans. This is relevant for Voting Rights Act litigation: in launching a claim, plaintiffs typically offer a demonstration that it would have been possible to form a greater number of majority-minority districts than are currently present~\citep{thornburg}. For example, see~\citep{negron,alabama-case,louisiana-case, georgia-case,new-alabama-case}. 
For these reasons, it is an interesting and practically useful problem to find districting plans that maximize the number of majority-minority districts that can be simultaneously created. 

However, the space of possible districting plans is enormous, and one cannot hope to examine every possible plan in any reasonable amount of time. To combat this, mathematicians have turned to random sampling to construct a large collection of districting plans for a given state. This is usually done via a random walk on districting plans: from the current plan, make a random alteration according to specified rules to obtain a new plan, and repeat. Various types of random walk steps have been considered, including a step that flips one geographic unit from one district to another or a step that redraws the boundary between a random pair of districts. 

Rather than using sampling to discover the properties of typical districting plans, we would like to explore districting plans that have as many majority-minority districts as possible.
An unbiased random walk on districting plans will not typically find the most extreme \oes or other highly unlikely events. One widely used approach for these types of problems is a {\it biased} random walk, where one increases the probability of transitioning to a configuration with more majority-minority districts. 
Because a single step changes the number of majority-minority districts by at most one, 
the idea is to make gradual progress toward the \oes we are seeking. 
Biased random walks usually outperform unbiased random walks for finding \oes, but they can still get stuck in local maxima and can fail to find, or even get close to, the desired configurations.  

We present a new method for finding \oes, which outperforms both unbiased and biased random walks in our use case. This method, which we call {\it short bursts}, begins at an arbitrary starting configuration and performs $b$ steps of an unbiased random walk. We then identify the observed configuration with the most extreme value (e.g., the districting plan with the most majority-minority districts) and we re-start another unbiased random walk of length $b$ from there.  Each unbiased random walk of length $b$ is called a short burst, and these short bursts and restarts are performed repeatedly.  An illustration of this process, projected onto the one-dimensional {\it score} being maximized, is shown in Figure~\ref{fig:many_bursts} for bursts of length~$b=5$. 

While we demonstrate the effectiveness of this method for the problem of maximizing majority-minority districts in the context of the Voting Rights Act, the idea itself is much more general. In any application where biased random walks are currently used to find \oes, short bursts has the potential to provide a significant improvement. 

\subsection{Summary of Findings}

After presenting preliminaries in Sections~\ref{sec:background} and~\ref{sec:shortbursts}, we turn to Louisiana's state-level House of Representatives  in 
Section~\ref{section:empirical}. Louisiana was chosen because of its significant Black population and large number of state representatives, which allows for better resolution in our results. We also note that majority-minority districts were a source of contention in the current districting plan for the Louisiana House of Representatives when it was introduced in 2011~\citep{anderson-la}, and Louisiana's U.S. Congressional districts have been challenged in court under the Voting Rights Act as racially gerrymandered~\citep{louisiana-case}. We empirically compare random walks with varying bias to short bursts with varying burst lengths. We see short bursts of almost any length outperform all biased random walks, and that a wide range of burst lengths perform similarly well (Figure~\ref{fig:la-maxes-range}). In fact, the biased random walks on average do not even find a single plan with as many majority-minority districts as the current enacted plan in the Louisiana House of Representatives. This shows that short bursts can be a valuable empirical tool in Voting Rights Act litigation.

We next investigate the reasons for the success of short bursts and look to better understand why short bursts with such a variety of burst lengths perform well.
In Section~\ref{sec:1D}, we consider simpler models and attempt to address these questions. For random walks on a uniform one-dimensional line, short bursts and biased random walks are essentially equivalent. However, when the probability distribution on that line is changed to be approximately normally distributed, short bursts begin to significantly outperform biased random walks, as we see in our Louisiana experiments. However, for this normal model, the shortest burst lengths perform better than all other burst lengths, which is not what we observe in Louisiana where a wide range of short burst lengths performs similarly well. Finally, we build a model of a state space with a bottleneck: in this setting, short bursts of a moderate length are best. Though the findings are preliminary, this suggests there may be underlying bottlenecks in the state space of possible districting plans, and this structure is why we see a wide range of effective short burst lengths rather than shorter bursts always performing better.

\subsection{Related Work}

Most mathematical work on redistricting thus far has focused on randomly sampling typical districting plans, rather than looking for \oes. These typical districting plans can be used to construct a baseline; if a given plan is an outlier with respect to this baseline, this is evidence it is a partisan or racial gerrymander. In the peer-reviewed literature, work in this direction includes~\citep{chen-florida-paper,Chikina2017,Chikina2019,recom,Herschlag2018nc,Carter2019mergesplit}. The process of random sampling  to find the typical properties of legal districting plans has also been used in technical reports~\citep{virginia,santaclara,lowell,chicago} and as evidence in court cases~\citep{pegden2017expert,duchin2017expert,mattingly2017expert,duchin2019brief}.

Though these approaches seek to understand typical properties, there are also numerous applications for finding districting plans that are extreme according to some statistic. Many states have laws that direct district-drawers to favor  plans with certain properties.
For example, North Carolina requires Congressional districting plans to split as few counties as possible, and when sampling districting plans this needs to be considered, as in~\citep{Herschlag2018nc}. Several states have introduced rules requiring districting plans to favor competitive districts, and recent work has explored what happens when districting plans are optimized for competitiveness~\citep{competitiveness}. 

Though extreme plans are sometimes sought in accordance with law and in good faith, it is also clear that extreme plans are sometimes desired for less upstanding purposes, such as to  supply an extreme partisan gerrymander. 
Since this paper presents an effective optimization protocol, we recognize that it could be used to gerrymander.  
However, we are confident that extreme plans found by an optimization protocol would be correctly detected as outliers in a neutral ensemble constructed only according to the enacted rules.

Markov chains are most commonly used to generate likely or typical states, though there has been some work using Markov chains to find \oes. 
For example, in some scientific applications, \oes in high-dimensional distributions can be found by constructing a Markov chain on the data points, calculating the stationary distribution of the Markov chain, and concluding that the data points with low stationary probability are \oes, such as in~\citep{OutliersviaRWs1}. 
To our knowledge, approaches resembling the short burst method have not yet appeared in the scientific literature.

\section{Background} \label{sec:background}

In this section we present relevant background on Markov chains, including how they are used for the purposes of redistricting. 

\subsection{Markov Chains}

The random walks and biased random walks described above are examples of a more general mathematical structure known as a {\it Markov chain}. 
A {\it Markov chain} is a memoryless random process on a state space $\Omega$; we only consider discrete state spaces, where $\Omega$ is all districting plans for a given state or region. 
In particular, a Markov chain randomly transitions between the states of $\Omega$ in a time-independent, or {\it stochastic}, fashion: there are fixed rules that prescribe the probabilities with which the chain transitions to its next state that depend only on its current state.  The probabilities of moves have no dependence on any past behavior of the Markov chain, how long the chain has been running for, or any other factors. For a more detailed background on Markov chains than is included here, see~\citep{lpw}. 

The most common use of Markov chains is to generate random samples via a process known as Markov Chain Monte Carlo. Provided a Markov chain is finite and aperiodic, after enough random steps it converges to a distribution on $\Omega$ known as a {\it stationary distribution}. If one runs a Markov chain for sufficiently long, the current state is approximately a random sample drawn from this stationary distribution. In particular, when $\Omega$ is extremely large, a random sample can often be obtained via this method fairly quickly without needing to enumerate $\Omega$.  As we discuss in more detail in Section~\ref{sec:mc-app}, Markov chains are used to generate collections of random districting plans and detect gerrymandering. 

An elementary example of a Markov chain, which will be of particular relevance in Section~\ref{sec:1D}, is the simple random walk on the line. Here $\Omega = \mathbb{Z}$, the initial state is $X_0 = 0$, and in each time step the chain moves right (to a larger integer) with probability $1/2$ and left with probability $1/2$, no matter the current position. Conditioned on $X_t = a$, this means the state $X_{t+1}$ at time $t+1$ satisfies
\[
\p(X_{t+1} = a+1) = 0.5, \ \  \p(X_{t+1} = a-1) = 0.5,\ \ \p(X_{t+1} = b) = 0 \text{ if } b \neq a+1, a-1
\]
We also consider biased random walks on the line. Like the simple random walk, the initial state is $X_0 = 0$ and the probability of moving right is the same no matter the current position. Unlike the simple random walk, the probabilities of moving left and right are not equal. 
For a biased random walk on the line with bias $p \in (0,1)$, conditioned on the state at time $t$ satisfying $X_t = a$, the state $X_{t+1}$ at time $t+1$ satisfies
\[
\p(X_{t+1} = a+1) = p, \ \  \p(X_{t+1} = a-1) = 1-p,\ \ \p(X_{t+1} = b) = 0 \text{ if } b \neq a+1, a-1
\]
In general, when considering biased random walks we assume $p > 1/2$, that is, the walk is biased rightwards, toward larger integers.

\subsection{Applications of Markov chains in redistricting}\label{sec:mc-app}

The number of possible valid districting plans in any given state is incredibly large.  Consider Minnesota, which has 134 State House districts that are paired to make their 67 State Senate districts: there are 6,156,723,718,225,577,984 ways just to pair State House districts in Minnesota to form State Senate districts, which doesn't even begin to address the number of valid State House districting plans one could draw \citep{alaska}.  We would like to be able to explore the full space of possible districting plans in a state, but we can't do that when it is infeasible to even count the size of that space.  

In redistricting, Markov Chain Monte Carlo (MCMC) methods allow us to generate an \emph{ensemble} of valid plans. This ensemble contains samples from the full space of possible districting plans, which are treated as representative of all possible districting plans.

Political districting plans partition a state or region into connected pieces known as districts. Districts are typically built out of census blocks, the smallest unit of census geography. (There are a few exceptions where districts split blocks, but these are negligibly rare.) In some cases, larger collections of census blocks, like precincts or counties, may be used as the atoms of a plan. Once the basic building blocks are selected, one can form a {\em dual graph} where each node is a geographic unit and the edges indicate geographic adjacency.  A districting plan then is thought of as a partition of the nodes of the dual graph into $d$ connected parts with approximately equal population in each.

To generate a collection of valid districting plans known as an {\it ensemble}, one takes Markov steps from a \emph{seed plan} and collects all plans visited, or some subset of the plans visited, to form the ensemble. Different ways of selecting seed plans and different types of Markov chain steps have been considered~\citep{virginia, Herschlag2018nc, Carter2019mergesplit, recom}. Here, we use Markov chain steps known as Recombination starting from initial plans generated by a recursive spanning-tree method, as in~\citep{recom}. 

These ensembles of typical districting plans have been used to conduct \emph{outlier analysis} in numerous legal challenges to partisan gerrymandering, including in expert reports submitted for court cases in Pennsylvania~\citep{pegden2017expert,duchin2017expert}, Wisconsin~\citep{chen-wisconsin-expert}, Florida~\citep{chen-florida-report}, and North Carolina~\citep{chen-nc, mattingly2017expert, duchin2019brief}. In outlier analysis, the current or proposed plan is compared to the constructed ensemble of typical districting plans along certain statistics of interest such as number of seats won by each party. It is deemed to be gerrymandered if it is an extreme outlier compared to the ensemble. 

While MCMC approaches are most commonly used to generate ensembles against which a proposed plan could be flagged as an outlier, sometimes we are interested in generating such extreme plans themselves. One powerful example is plans that have more majority-minority districts than in the enacted plan. In the landmark U.S. Supreme Court case \cite{thornburg}, the court established the framework upon which to build Voting Rights Act cases against racial gerrymandering.  One of those threshold conditions is that \emph{``... the minority group must be able to demonstrate that it is sufficiently large and geographically compact to constitute a majority in a single-member district.''} 
While the VRA does not ultimately require the creation of majority-minority districts, to launch a claim, plaintiffs typically demonstrate that \textit{it would have been possible} to form a greater number of such districts than are present in the system under challenge.\footnote{In the important case {\em Bartlett v. Strickland} (2009), the court reaffirmed the importance of districts with over 50\% minority population share, as opposed to districts in which the minority group could merely exert influence.}  This demonstration step is an important application of optimization techniques in practical redistricting. Markov chains like Recombination are unlikely to find these \oes in a reasonable amount of time simply because of the size of the sample space of districting plans, especially as they are designed for representative rather than extreme sampling. Biasing runs towards accepting more extreme plans gives an improvement, but we present another method which we find is empirically even better at finding extreme plans, such as those with many majority-minority districts. 

\section{Short Bursts} \label{sec:shortbursts}

We begin by formalizing the problem. Let $\Omega$ be a discrete state space, and let $\mathcal{M}$ be a Markov chain on $\Omega$.  Let $s: \Omega \rightarrow \mathbb{Z}$ be a score function where if $\sigma$ and $\tau$ are two states of $\Omega$ that differ by one transition of $\mathcal{M}$, then $|s(\sigma) - s(\tau)| \leq 1$. The goal is to find a state with as large a score as possible. We do not expect our results to change for real-valued score functions or other small, constant bounds on the score differences between adjacent states, provided these differences are small with respect to the overall range of possible scores. 
In our applications, $\Omega$ is all districting plans for a state, $\mathcal{M}$ is the Recombination Markov chain, and $s(\sigma)$ is the number of majority-minority districts in districting plan $\sigma$. 
 
One approach to finding the state in $\Omega$ with the largest score is to simply list everything in $\Omega$ and calculate the score of each. For many applications, including ours, this is completely infeasible due to the large size of $\Omega$. We note that because we cannot hope to examine all of $\Omega$, or even a sizeable fraction of the states in $\Omega$, we are unable to find the state with the maximum score, or to certify that a state with a large score is in fact the maximum.  When we talk below about maximizing score functions, we mean finding a state with as large a score as we can find, not finding the true maximum score. 

We propose a new method for maximizing the score function which we call {\it short bursts}. This process works in phases known as {\it bursts}. In one burst, $b$ steps of Markov chain $\mathcal{M}$ are performed from the starting state; $b$ is called the {\it burst length}.  The states visited in the burst are examined, and the state $\sigma$ with the largest score is found. If there are multiple states with the same score visited during a burst, one should pick the most recently visited.  The next burst begins from $\sigma$, and this process is repeated as many times as is desired.  See Figure~\ref{fig:many_bursts}, which depicts short bursts of length $5$ for the simple random walk on the line where the score of a state is its integer label. For more complicated state spaces, the projection of short bursts in $\Omega$ onto the score functions looks similar, though the score may at times stay the same for multiple iterations. In addition to burst length, sometimes it will be more convenient to discuss {\it burst size}, which is the total number of plans visited during a short burst: since this counts both the starting and ending plans, it is one more than the burst length. 

\newcommand{\xMin}{-2}
\newcommand{\xMax}{5}
\newcommand{\yMin}{0}
\newcommand{\yMax}{8}
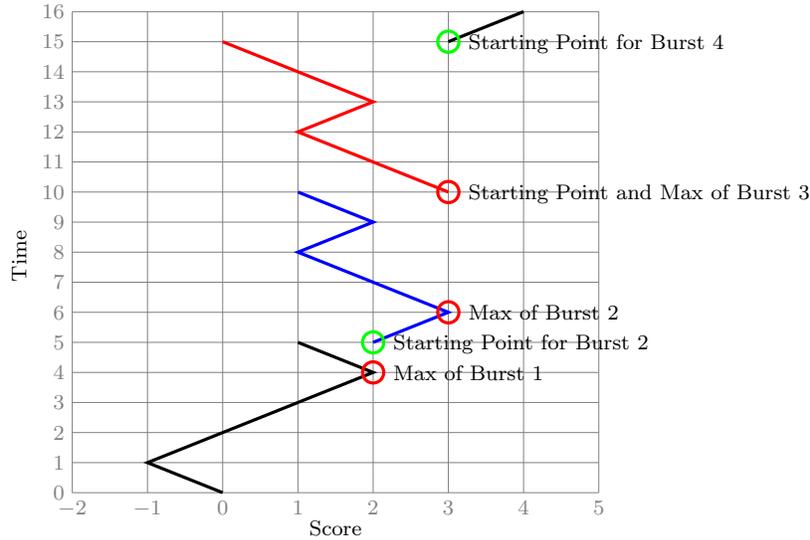
\begin{figure}
    \centering

\begin{tikzpicture}[yscale=0.8]
    \foreach \i in {\xMin,...,\xMax} {
        \draw [very thin,gray] (\i,\yMin) -- (\i,\yMax)  node [below] at (\i,\yMin) {$\i$};
    }
    \foreach \i in {\yMin,...,16} {
        \draw [very thin,gray] (\xMin,\i/2) -- (\xMax,\i/2) node [left] at (\xMin,\i/2) {$\i$};
    }
\draw[very thick,black] (0,0) -- (-1,.5) -- (1,1.5) -- (2,2) -- (1,2.5);
\draw[very thick, blue] (2,2.5) -- (3,3) -- (2,3.5) -- (1,4) -- (2,4.5) --(1,5);
\draw[very thick, red] (3,5) -- (1,6) -- (2,6.5) -- (0,7.5)  ;
\draw[very thick, black] (3,7.5) -- (4,8);
\draw 
node[label={[label distance=0cm,text depth=-2ex,rotate=90]left:Time}] at (-2.6, 4.5) {}
node[label={[label distance=0cm,text depth=-2ex]below:Score}] at (1.5, -.2) {}
node[circle, red,very thick, draw, label={[label distance=0cm,text depth=-3ex]right:Max of Burst 1}] at (2,2) {}
node[circle, green, very thick, draw, label={[label distance=0cm,text depth=-2ex]right:Starting Point for Burst 2}] at (2,2.5) {}
node[circle, green, very thick, draw, label={[label distance=0cm,text depth=-2ex]right:Starting Point for Burst 4}] at (3,7.5) {}
node[circle, red, very thick, draw, label={[label distance=0cm,text depth=-2ex]right:Max of Burst 2}] at (3,3) {}
node[circle, red, very thick, draw, label={[label distance=0cm,text depth=-2ex]right:Starting Point and Max of Burst 3}] at (3,5) {};
\end{tikzpicture}
    \caption{The position over time for short bursts of length five for the simple random walk on the line, with score equal to position. After each five steps of this simple random walk, we re-start at the highest position visited so far, take five more steps, and repeat.}
    \label{fig:many_bursts}
\end{figure}

We will revisit short bursts for the simple random walk on the line and give a precise mathematical characterization of it in Section~\ref{sec:1D}.  First, we demonstrate empirically that, somewhat surprisingly, for the redistricting application of maximizing majority-minority districts, short bursts of nearly all considered lengths outperform biased random walks for finding districting plans with large scores.

\section{Majority-Minority Districts in Louisiana}
\label{section:empirical}

In this section we show that short bursts outperform biased random walks for finding districting plans for the Louisiana  House of Representatives with a large number of majority-minority districts. In addition to Louisiana, we also performed this analysis on New Mexico, Texas, and Virginia; our findings for Texas are discussed in Section~\ref{subsec:coalitions} in the context of voting coalitions, while our results for New Mexico and Virginia are summarized in Appendix~\ref{app}. 

We saw similar results in all cases---with short bursts outperforming biased runs---in states where the minority population accounted for roughly 30\% or more of the voting age population. For states where the minority population accounts for less than 20\% of the total voting age population, there were similar trends but experimental results were too tightly clustered to obtain conclusive findings. 

We begin with some fundamentals about districting in Louisiana, and then describe our methods in more details before giving our results, which indicate that short bursts clearly outperform both unbiased and biased random walks. 

\subsection{Louisiana: Demographics and Districting}

We focus our analysis on Louisiana as it is a state with both a significant People of Color population (around 30\% Black and 10\% other People of Color populations) and a large enough number of State House seats (105). The latter provides us with good resolution when our results are projected onto numbers of seats. At the same time, Louisiana is populous enough that population balance between districts can be achieved even with districts drawn on census block groups.

\subsubsection{Minority Representation in the Louisiana State Legislature}

The Louisiana State Legislature is comprised of the Louisiana Senate, with 39 seats, and the Louisiana House of Representatives, with 105 seats.
In the 2010 Decennial Census, the non-Hispanic Black population (BPOP) made up 31.82\% of Louisiana's total population (TOTPOP), and the non-Hispanic Black voting age population (BVAP) made up 29.85\% of Louisiana's voting age population (VAP).  Table \ref{tab:LA_demos} details demographic information from the 2010 Decennial Census as well as the 2014-2018 American Community Survey (ACS) Estimates.  (Note that here and throughout this section, ``Hispanic" is used to refer to all Census/ACS respondents who selected Hispanic or Latino as their ethnicity regardless of their race selection, ``White" refers to non-Hispanic White alone, ``Black" refers to non-Hispanic Black alone, and so forth.)

\begin{table}
    \centering
    \small{
    \begin{tabular}{|p{4.1cm}||c|c|c|}
    \hline
        &2010 Census & 2010 Census & 2014-2018 ACS \\
            & TOTPOP & VAP & TOTPOP     \\
    \hline
        White       & 2,734,884 (60.33\%) & 2,147,661 (62.88\%) & 2,744,265 (58.84\%) \\
        Black       & 1,442,420 (31.82\%) & 1,019,582 (29.85\%) & 1,492,230 (32.00\%) \\
        Hispanic    & 192,560 (4.248\%) & 138,091 (4.043\%)   & 234,920 (5.037\%) \\
        Asian       & 69,327 (1.529\%) & 53,638 (1.57\%)    & 79,137 (1.697\%) \\
        Two or more races   & 57,766 (1.274\%) & 30,755 (0.9005\%) & 78,991 (1.694\%) \\
        Amer. Indian/Alaska Native & 28,092 (0.6197\%) &  19,952 (0.5842\%) & 24,014 (0.5149\%)\\
        Other races & 6,779 (0.1495\%) & 4,526 (0.1325\%)    &  8,905 (0.1909\%) \\
        Native Hawaiian and Other  \newline \textcolor{white}{.} \hspace{1mm} Pacific Islander  & 1,544 (0.0341\%) & 1,152 (0.0337\%) & 1,154 (0.0247\%)\\
    \hline
        Total & 4,533,372 & 3,415,357 & 4,663,616 \\
    \hline
    \end{tabular}}
    \caption{Racial demographics in Louisiana. TOTPOP is total population, VAP is Voting Age Population, and ACS refers to the Census Bureau's American Community Survey. }
    \label{tab:LA_demos}
\end{table}

\begin{figure}
   \centering
    \includegraphics[height=0.25\textheight]{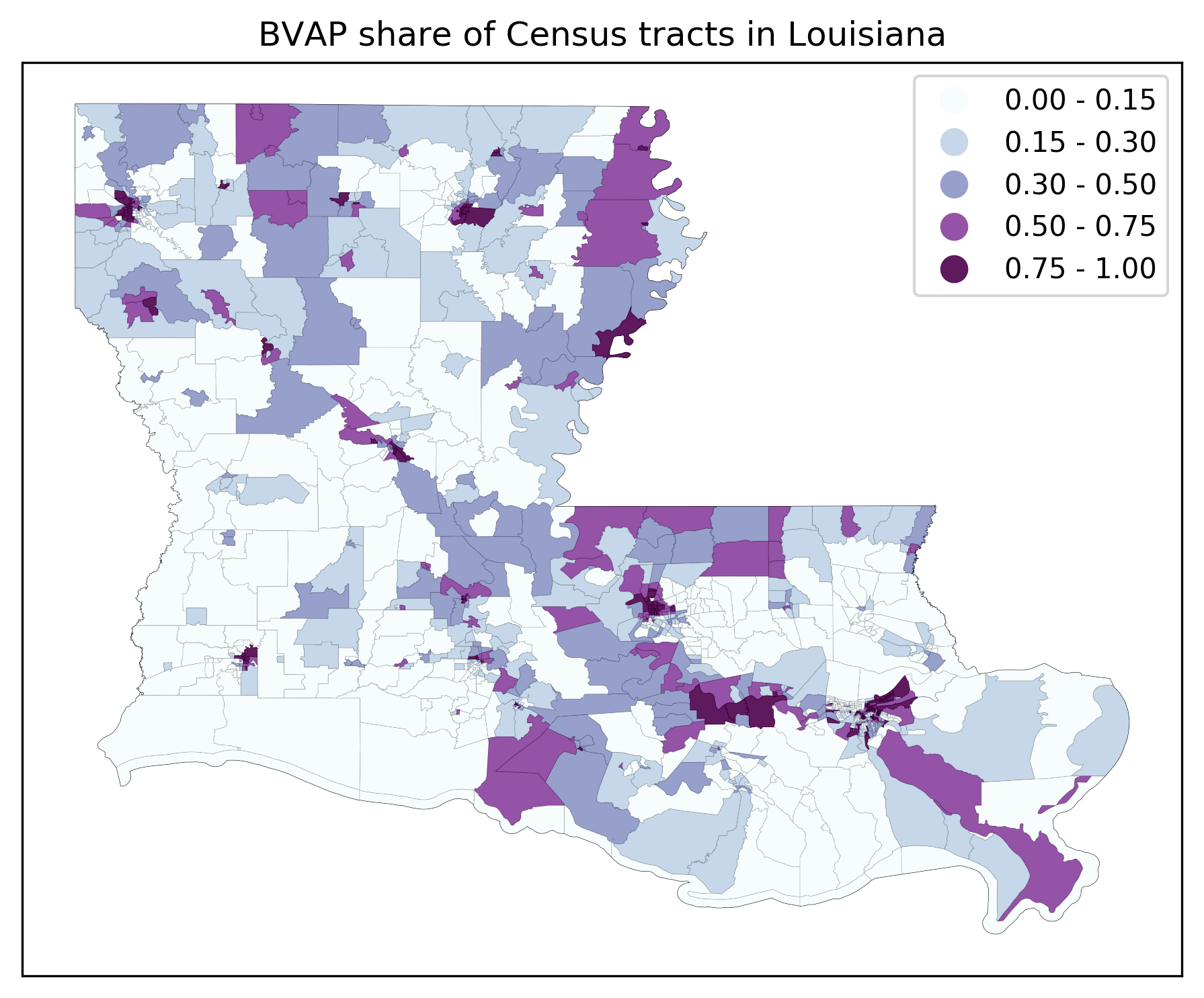}
    \qquad
    \includegraphics[height=0.25\textheight]{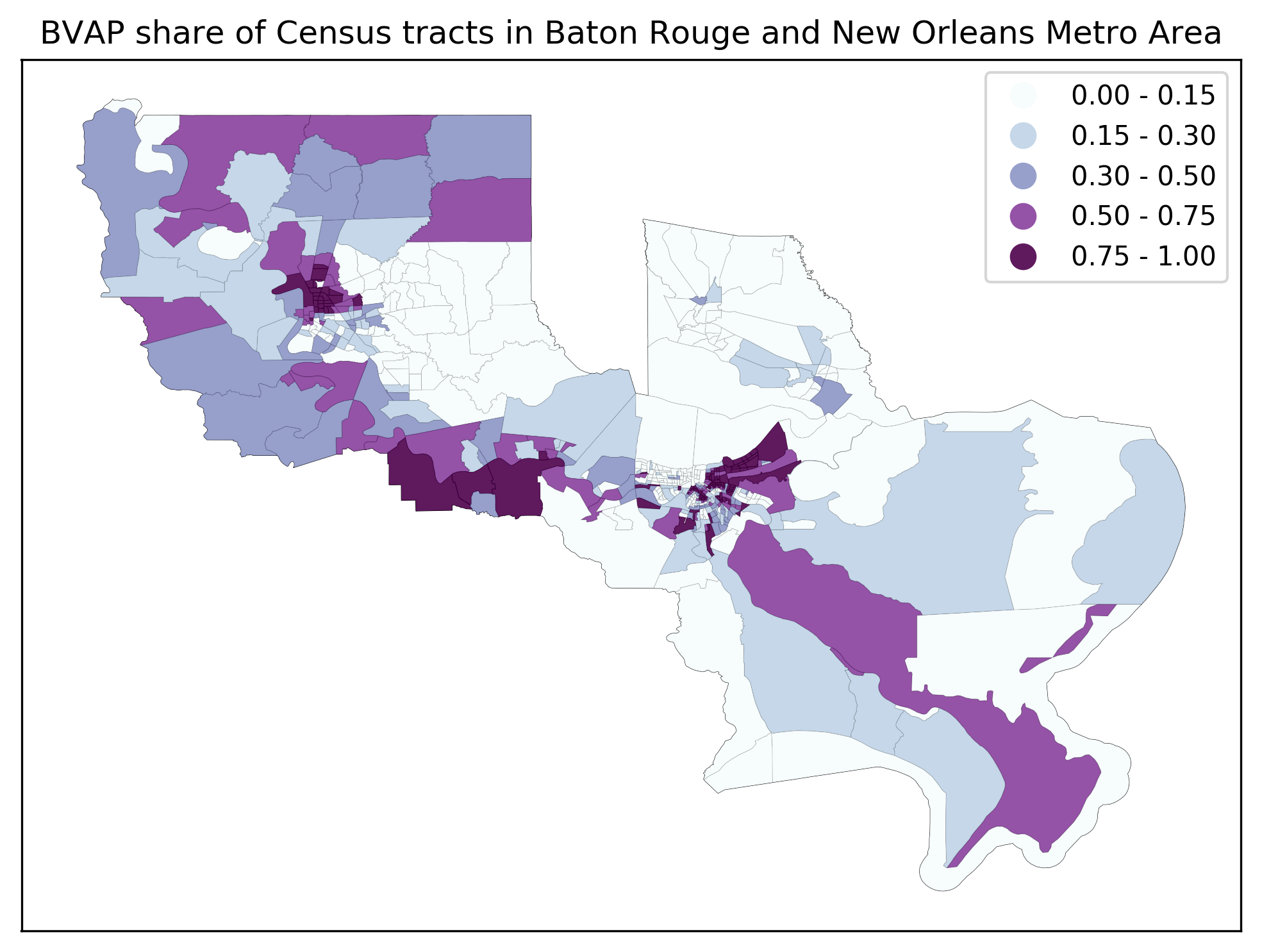}
    
    \caption{Distribution of Black voting age population in Louisiana.  The top figure shows the whole state and the bottom figure shows the combined metropolitan areas of Baton Rouge and New Orleans.}
    \label{fig:la_choro}
\end{figure}

Of Louisiana's 39 senators, 10 are members of the Louisiana Legislative Black Caucus (LLBC); in the Louisiana House of representatives, 27 of the 105 representatives are members of the LLBC~\citep{llbc}.
 From 2010-2018, there was one Hispanic representative in the State House, Helena Moreno, who resigned from the House of Representatives after being elected as Councilmember At-Large on the New Orleans City Council.  The Louisiana House of Representatives currently reports no Hispanic members.

The Black voting age population exceeds 50\% of the voting age population
in 10 State Senate districts and 28 State House districts. 
There are currently no State House or State Senate districts where the Hispanic voting age population share is over 50\%.  The House district with the largest such share is District 93, where the Hispanic population is 21.1\% of the voting age population.

\subsubsection{Redistricting in Louisiana}

Like all states, Louisiana must comply with federal equal population requirements and Section 2 of the Voting Rights Act.  Additional guidelines were adopted by the state legislative committees tasked with redistricting~\citep{LAHouseRules, LASenateRules}. These guidelines call for districts to be contiguous; that districts should respect recognized political boundaries and natural geography of this state, to the extent practicable; and that due consideration should be given to existing district alignments.  In addition they state that no plan with ``absolute deviation of population which exceeds plus or minus five percent of the ideal district population" shall be considered for the Louisiana Senate or the Louisiana House of Representatives.

\subsection{Methods}

In order to most prominently see the effects of short bursts, we chose states that had both a relatively large number of state House districts and a large minority population. While the results we first present here focus on BVAP in Louisiana, we also used similar methods to compare short bursts and biased runs for People of Color voting age population in Louisiana and for various minority populations in New Mexico, Texas, and Virginia; see Section~\ref{subsec:coalitions} and Appendix~\ref{app}.

\subsubsection{Data}

2010 Decennial Census demographic data was downloaded from the Census API. The 2010 census block group shapefile for Louisiana, as well as block and tract level shapefiles and state legislative districts, is available from the US Census Bureau’s TIGER/Line Shapefiles.
Block groups were assigned to state legislative districts using MGGG's geospatial preprocessing package \url{maup}, available at \href{https://github.com/mggg/maup}{https://github.com/mggg/maup}. 

Our experiments focus on the block group level data, although data at other levels is included in our state level analysis.
The final data for Louisiana (and other states) can be found in the github repository %
\url{https://github.com/vrdi/shortbursts-gingles/tree/master/state_experiments}.

\subsubsection{Experiments}

We want to understand how short bursts performs in maximizing the number of majority-minority districts in Louisiana. Specifically, the score we assign to each districting plan is the number of districts where the voting age population is over $50\%$ black (majority BVAP districts). For each short burst length $b \in \{2,5,10,25,50,100,200 \}$, we ran ten trials, keeping track of the largest score seen among districting plans visited; for further granularity, the additional burst lengths $b \in \{6,7,8,9,11,12,13,14,15,20\}$ were then considered as well.  For comparison, we also performed some biased and unbiased random walks that did not use the short bursts method.  For our biased random walks, if a new proposed plan increased the score or kept the score the same, we always accepted it; if it decreased the score, we accepted it with probability $q$. We ten trials for each $q \in \{ \frac{1}{4}, \frac{1}{8}, \frac{1}{16}\}$, where $q = 1$ corresponds to an unbiased random walk. Here smaller $q$ values correspond to a heavier bias in the random walk.

In each of these trials we used block groups as our basic geographic building blocks for districts. Population constraints were set so that each State House district can be at most 4.5\% away from the ideal population. This was the tightest population bound for which we could find a good starting plan, and is a slightly tighter bound than the level of population deviation present in Louisiana's currently enacted plan.

Each trial is started at a \textit{seed plan} that we've generated with a recursive spanning tree method  to be within the population constraints (for more details on this method, see~\citep{virginia}). We then generate 100,000 additional random valid districting plans beginning at this seed plan using the Recombination Markov chain developed by MGGG~\citep{recom}. At each step, in addition to checking a proposed plan has balanced population, we check a compactness constraint, ensuring  the number of dual graph edges whose endpoints are in different districts is no more than twice the number of such edges present in the seed plan.

\subsection{Results}

\subsubsection{Unbiased Ensemble Baseline}

\begin{figure}
    \centering
    \includegraphics[width=0.75\textwidth]{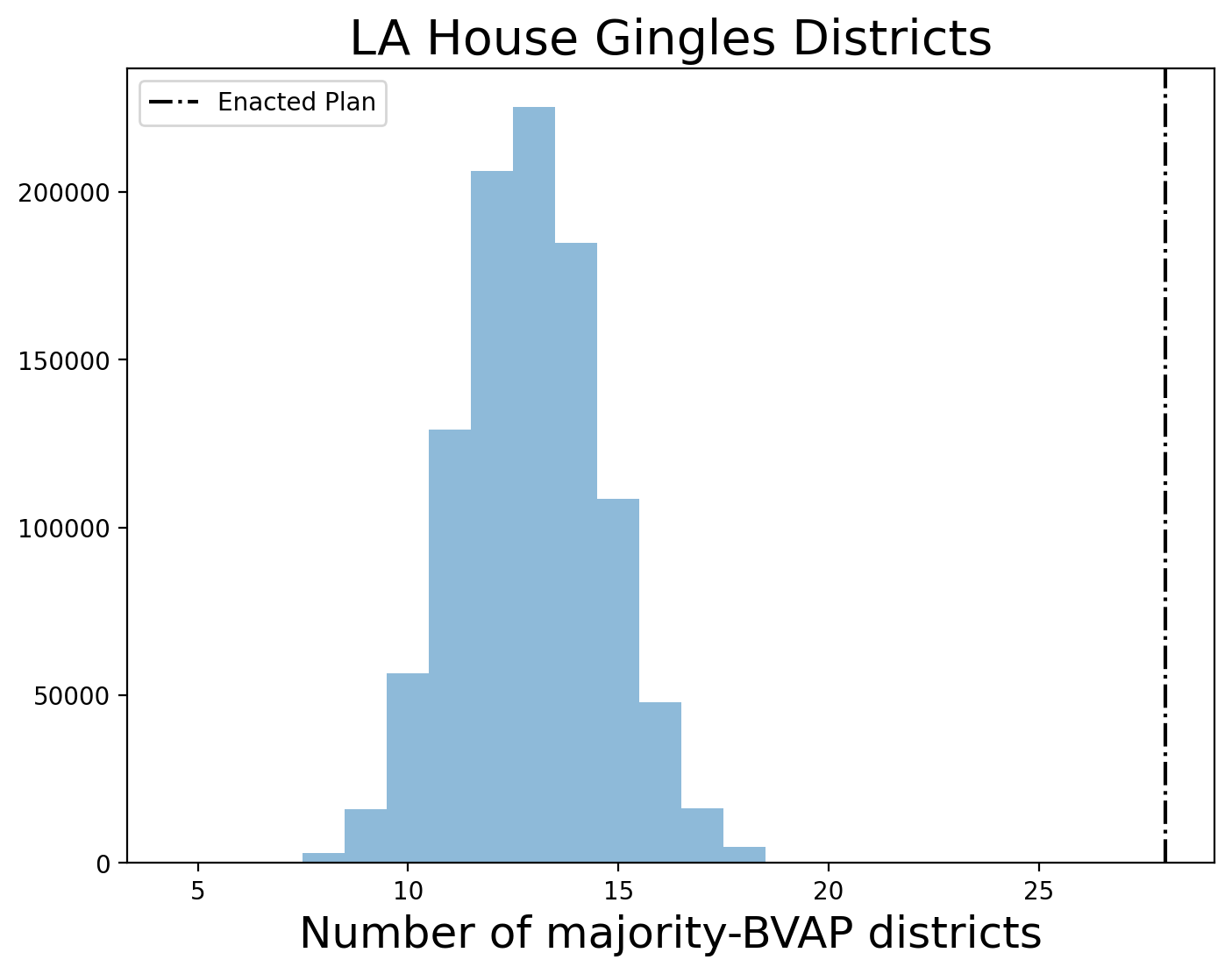}
    \qquad
    \includegraphics[width=\textwidth]{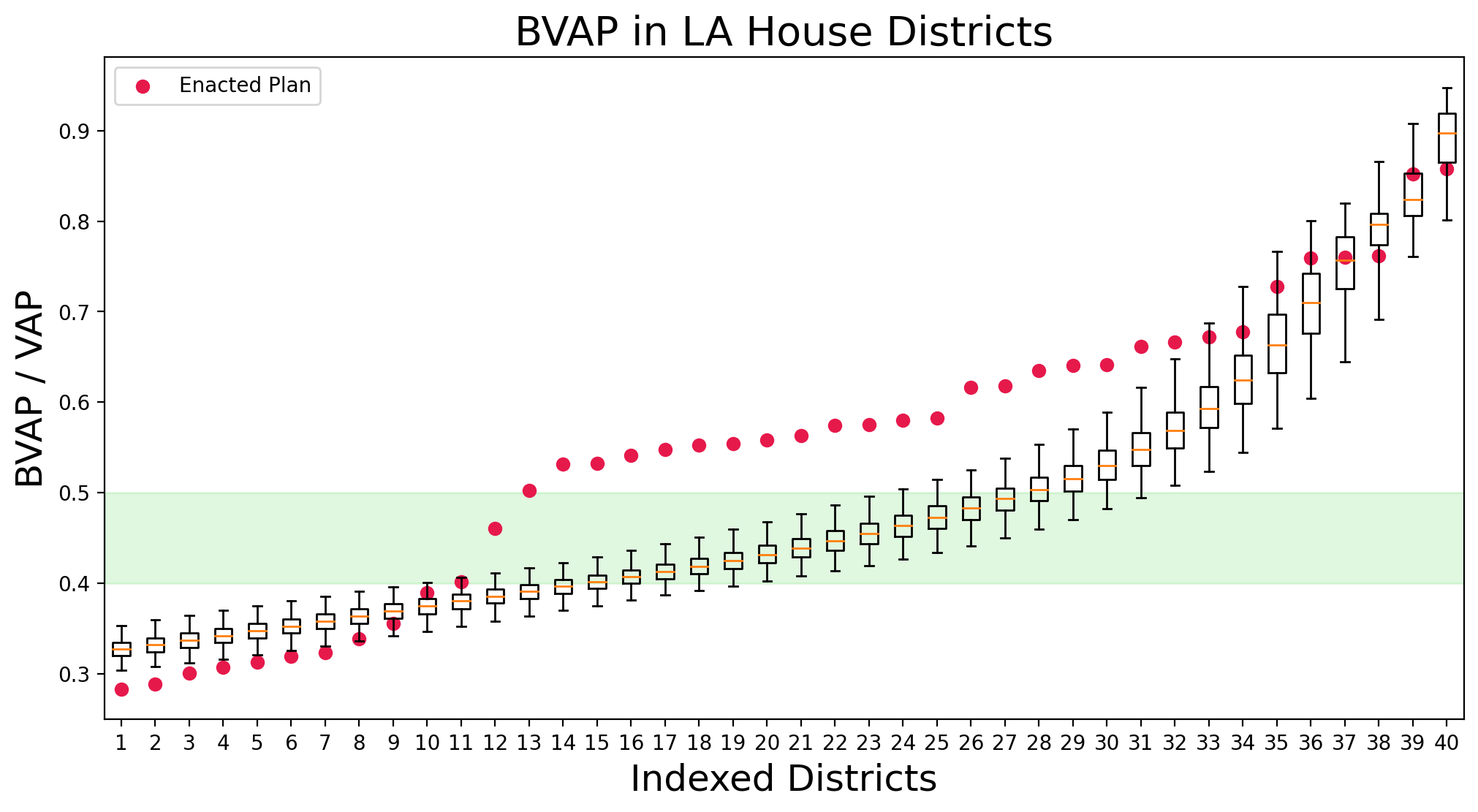}

    \caption{The distribution of majority-BVAP districts in randomly sampled districting plans for the Louisiana state House of Representatives. The histogram (top) shows the number of sampled plans with various values for the number of majority-BVAP districts.  The boxplot (bottom) shows the observed distribution of BVAP percentage in the top 40 State House districts indexed by BVAP. That is, the rightmost box shows the BVAP percentage in the district with the largest BVAP percentage across all sampled plans, the second-rightmost box shows the BVAP percentage in the district with the second-largest BVAP percentage across all sampled plans, etc.
    The enacted plan has 28 majority-BVAP districts in the House, indicated by a dashed-dotted line in the histogram and the red dots on the boxplot.  Of these 28 districts, 27 have elected Black representatives.
    }
    \label{fig:unbiased_results}
\end{figure}

First, we ran an unbiased random walk on the state space to understand some of the baseline properties of Louisiana State House districting plans; see Fig.~\ref{fig:unbiased_results} which shows a histogram of the number of majority-BVAP districts in each plan encountered and a boxplot of the BVAP in the top 40 districts with the largest BVAP. In particular, we se typical plans have an average of about 13 majority-BVAP districts, though plans with as many as 18 majority-BVAP districts were found by this unbiased Markov chain. This baseline will serve to help us compare biased random walks and short bursts.

\subsubsection{Comparing Short Bursts and Biased Runs}

In this section, we use both short bursts and biased runs to find districting plans with as many majority-BVAP districts as possible. 
\begin{figure}
    \centering
    \includegraphics[width=0.8\textwidth]{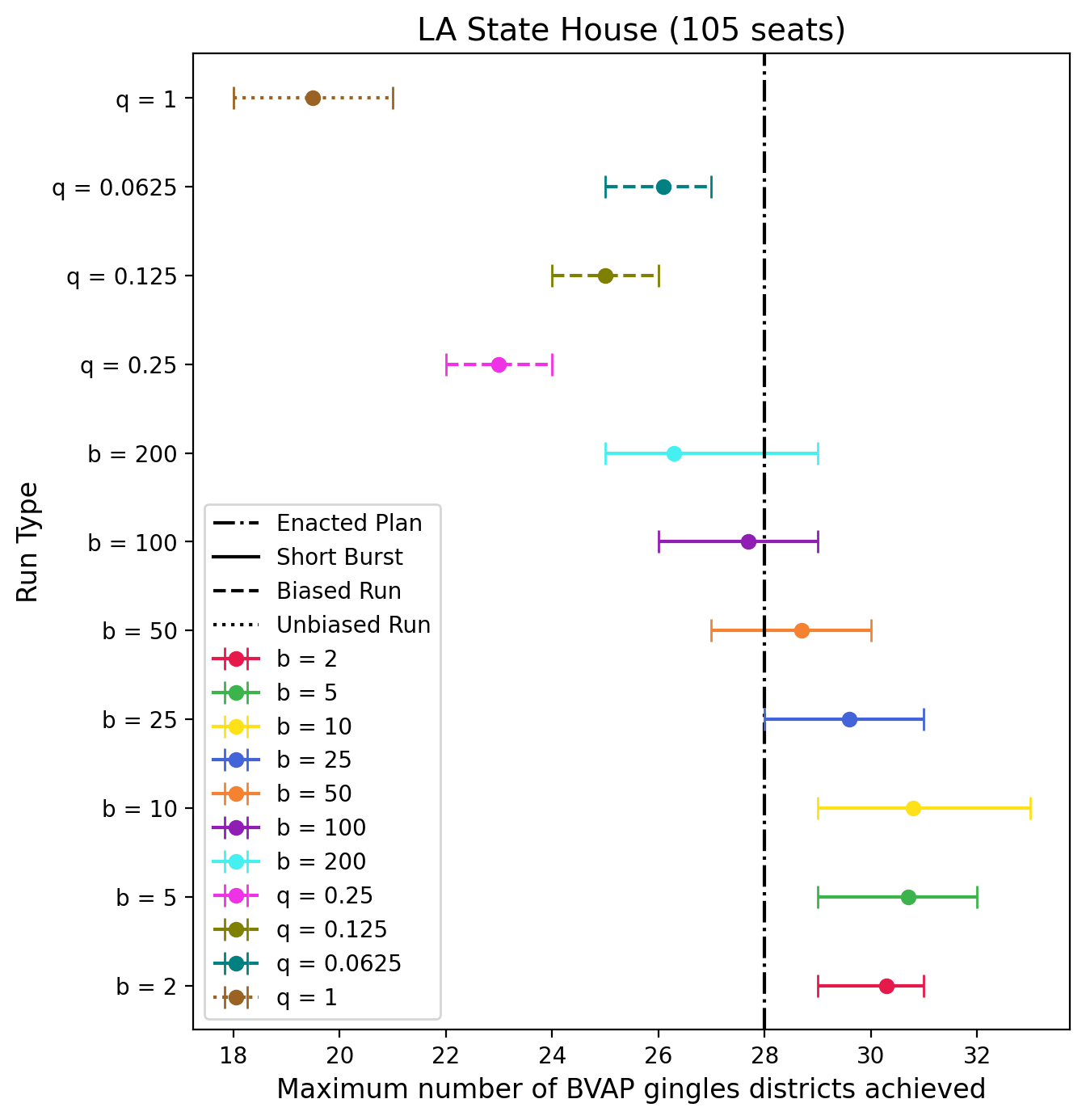}
    \caption{Range of maximum numbers of majority-BVAP districts observed for short bursts, biased runs, and unbiased ($q$ = 1) runs.  The dot is the mean across the ten trials and the bars indicate the min/max range.  Each type of 100,000 step run was performed 10 times.  The short burst runs not only perform better than the biased and unbiased runs, but they also all find a plan with more Gingles districts than the enacted plan.}
    \label{fig:la-maxes-range}
\end{figure}

Figure~\ref{fig:la-maxes-range} shows the maximum number of majority-BVAP districts found using short burst runs with bursts of varying lengths, from 2 to 200; for biased random walks with three different biases $ q = 0.25$, $q = 0.125$, and $q = 0.0625$; and for unbiased runs ($q = 1$).  To be sure that our results were not atypical, this same experiment was performed ten times in each case. 

While the unbiased process, across 10 different runs, only found districting plans with maximum scores between 18 and 21 in 100,000 steps, short bursts of all considered lengths easily found plans with far more than 21 majority-BVAP districts. In fact, short bursts found districting plans with up to 33 majority-BVAP districts. 

A biased run with bias $q$ follows the same steps as an unbiased run, with one exception: If the proposed new plan has fewer majority-BVAP districts than the current plan, then the Markov chain only moves to that new plan with probability $q$ and remains at its current plan otherwise.  Note that a biased run with probability $q$ = 1 is equivalent to an unbiased run.  While the runs with bias $0.0625$ are fairly comparable to short bursts of length $200$, otherwise all short burst runs outperformed all biased runs. Short bursts, when run for the same number of total steps as a biased random walk, were able to find plans with up to six additional majority-BVAP districts, a remarkable improvement. We note that biased runs with bias smaller than $0.0625$ were not considered as rejection probabilities higher than $0.9375$ significantly slow down the chain, leading it to stay at the same districting plan for long periods of time.

The success of the short bursts approach compared to biased random walks is robust to the choice of burst length: every short burst run for $b = 2,5,10,25$ outperformed every biased random walk; the worst short burst run for $b = 50$ had the same performance as the best biased random walk; and short bursts runs with $b = 100$ still noticeably outperformed biased random walks on average. For further granularity, we also considered additional burst lengths between 5 and 25, and saw similarly strong performances; see Figure~\ref{fig:LA-bvap-maxes-more}. 

\begin{figure}
    \centering
    \includegraphics[width = 0.8\textwidth]{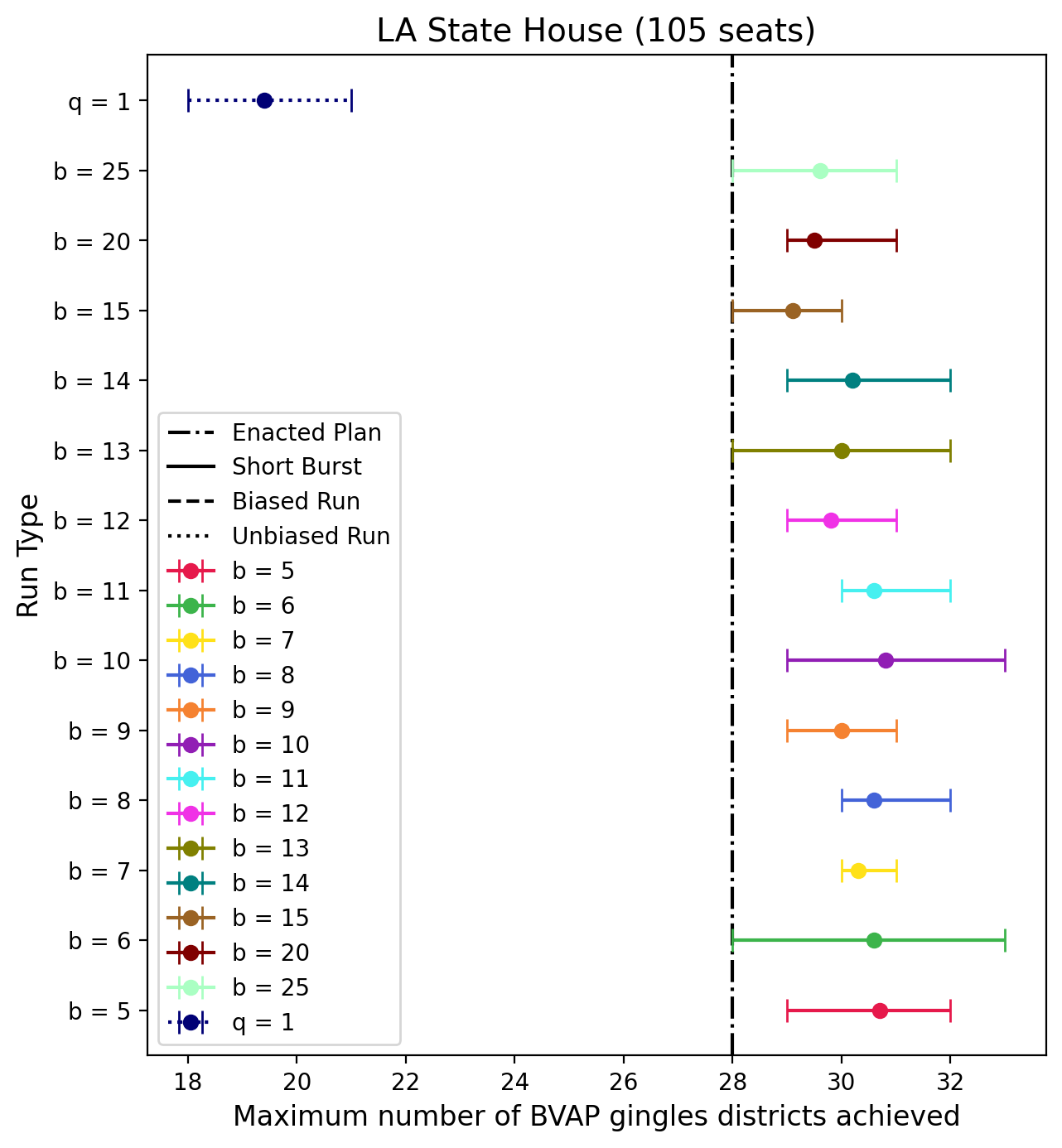}
    \caption{Range of maximum numbers of majority-BVAP districts observed for short bursts of different lengths. Each type of 100,000 step run was performed ten times. The dot is the mean across the ten trials, and the bars indicate the min/max range. All of these burst length, from 5 up to 25, perform similarly well, and better than any biased random walks (see Fig.\ref{fig:la-maxes-range} for the outcomes of similar experiments for biased and unbiased random walks).}
    \label{fig:LA-bvap-maxes-more}
\end{figure}

For practitioners interested in applying this method in political districting contexts, we note that $b = 10$ appears to perform best, with both the highest average score achieved and highest maximum value, and would therefore suggest it as a starting point. We would expect to see little difference if nearby values from 5 to 25 were used.  We do not have statistical evidence that 10 is the best choice, due to the difficulty of distinguishing between the similarly good performances of several burst lengths. We feel this is a strength of our method, as one need not find the exact optimal burst length in order see marked improvements over other methods such as biased random walks.

We also looked at how quickly this separation between the different methods emerged. The average number of majority-BVAP districts found over time for short bursts of the  best observed burst length ($b = 10$), a biased random walk with the best observed bias ($q = 0.0625$), and an unbiased random walk are shown in Fig.~\ref{fig:la-maxes-all-bvap}. The shaded cones around the average lines denote the min/max range across the ten runs. This plot demonstrates the separation between short bursts and bias random walks happens quite early on, by around 10,000 steps, and persists throughout the remaining 90,000 steps. 

\begin{figure}
    \centering
    \includegraphics[width=\textwidth]{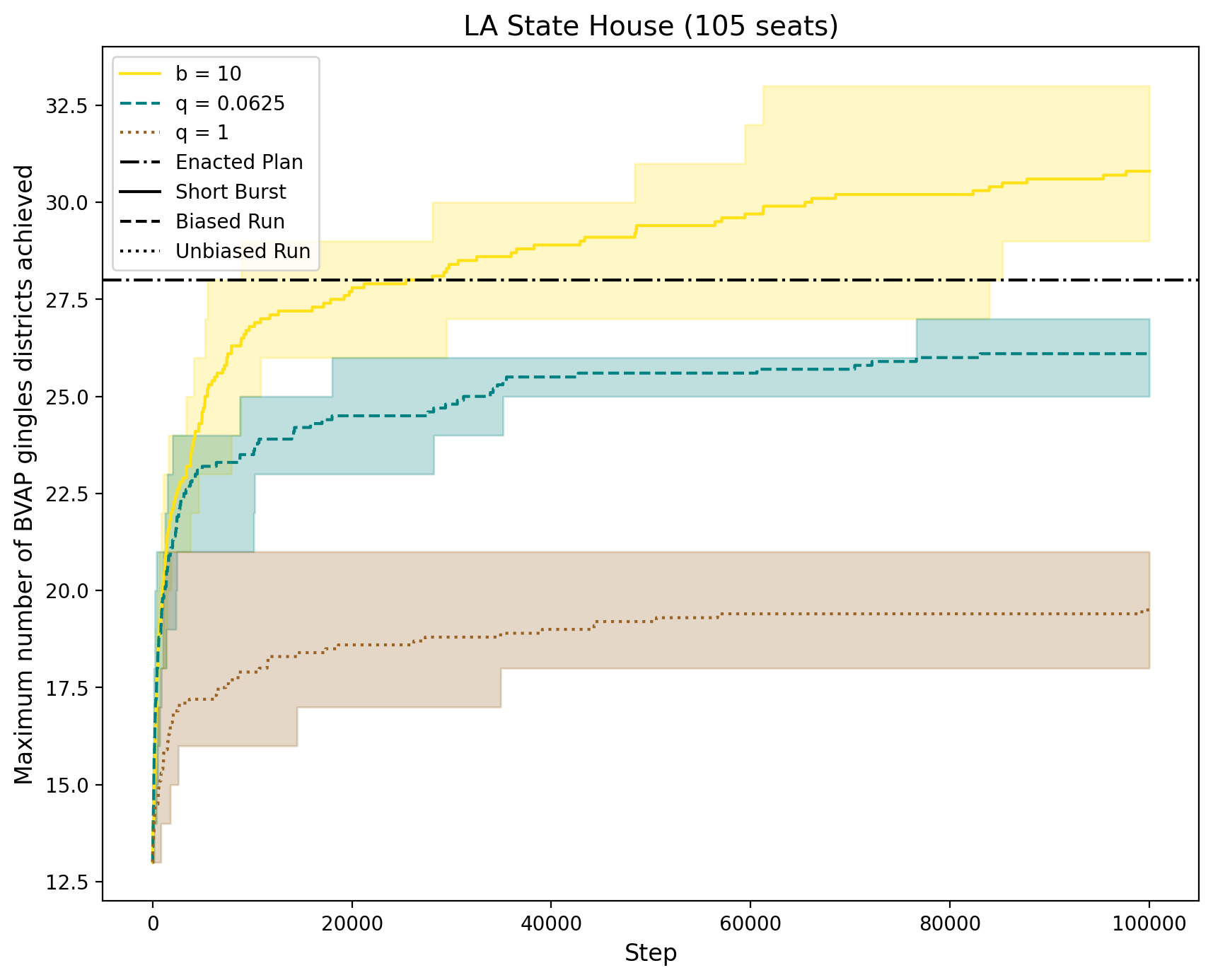}
    
    \caption{Maximum numbers of majority-BVAP districts observed at each step in the chain for the best preforming short burst length ($b$ = 10), biased run ($q$ = 0.0625), and the unbiased run ($q$ = 1).  The line indicates the mean, and the colored band the min/max range, across the trials.  Each type of 100,000 step run was performed 10 times.  
    The short bursts runs outperform the biased runs.
    }
    \label{fig:la-maxes-all-bvap}
\end{figure} 

\subsection{Different Minority Groups and Voting Coalitions} 
\label{subsec:coalitions}

In Louisiana, there is only one minority group with a large enough population to comprise a majority in any district in the state's current districting plan. However, it is a well-established fact that people of color often form voting coalitions to elect minority representatives. For this reason, it may make sense to instead consider the People of Color Voting Age Population (POCVAP) rather than BVAP in Louisiana, as this may give a better representation of how likely a district is to elect a minority representative. We saw similar results to our experiments using BVAP, though in fact there was an even larger separation between the success of biased runs and of short bursts; see Figure~\ref{fig:la-maxes-all-pocvap}. 

\begin{figure}
    \centering
    \begin{subfigure}{0.44\textwidth}
        \centering
        \includegraphics[width=\textwidth]{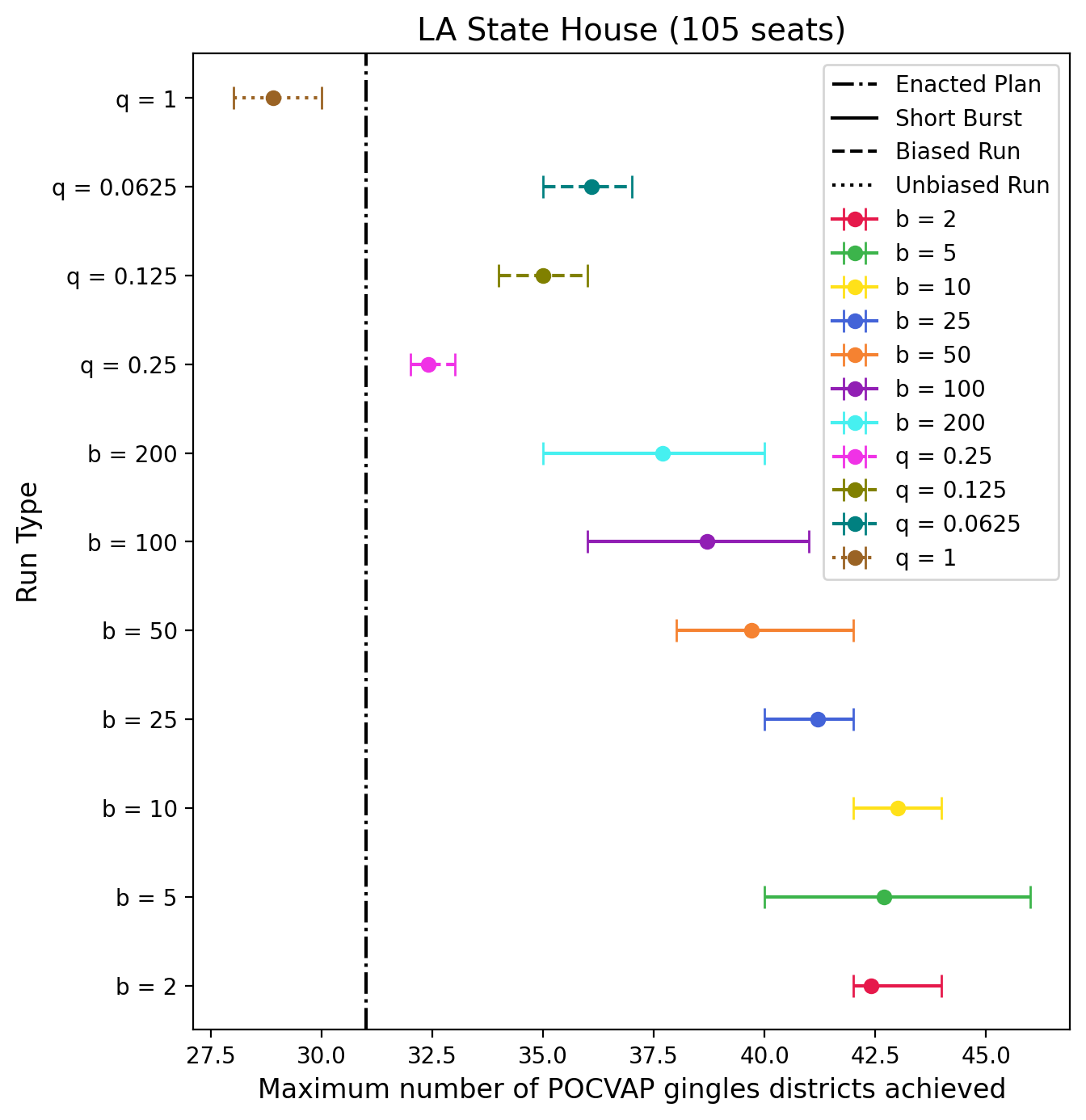}
        \caption{Maximum Gingles districts observed}
    \end{subfigure}
    \begin{subfigure}{0.55\textwidth}
        \centering
        \includegraphics[width=\textwidth]{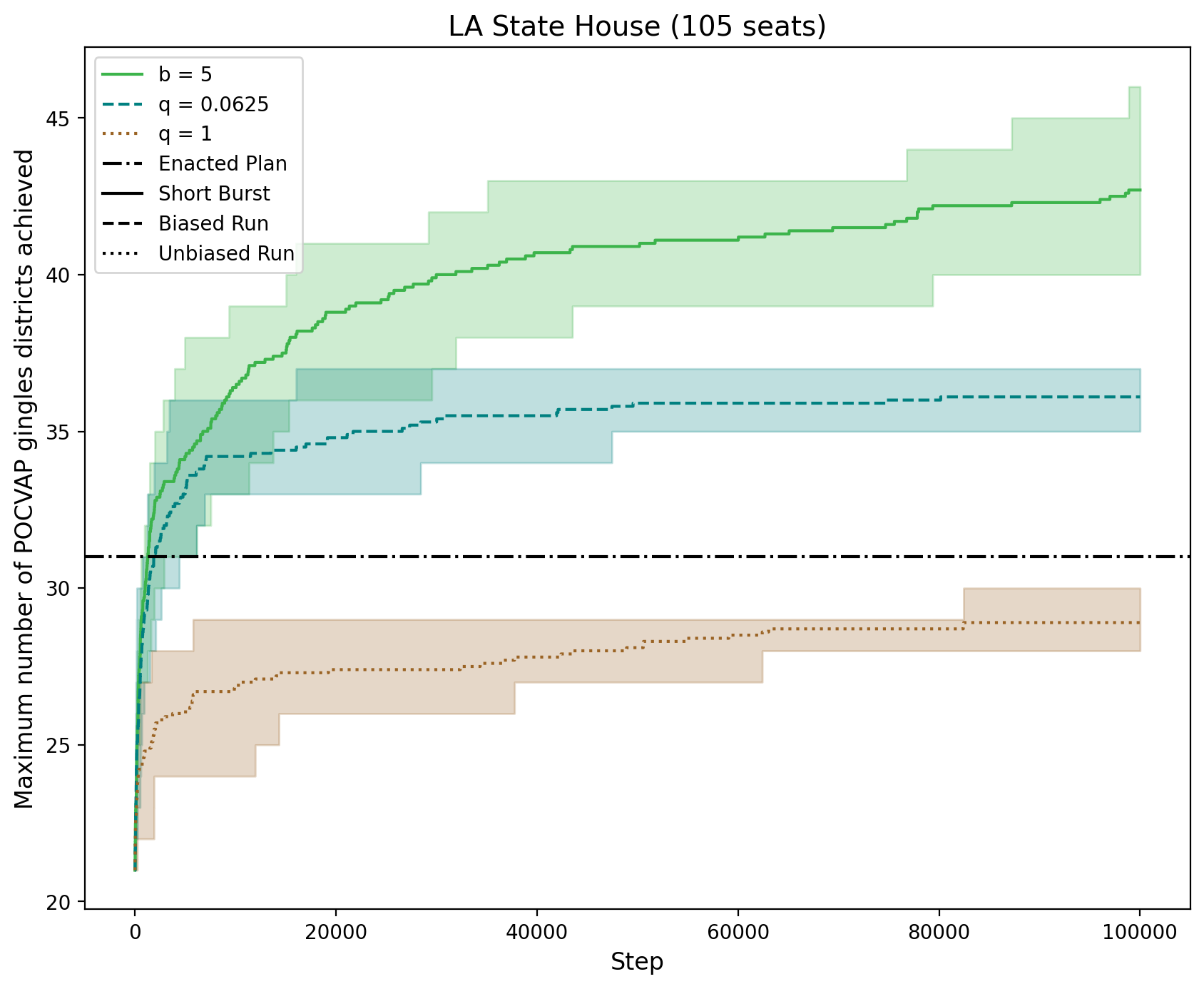}
        \caption{Maximums observed over time}
    \end{subfigure}
    
    \caption{Louisiana | Maximum numbers of majority-POCVAP districts observed for short bursts, biased runs, and an unbiased run ($q$ = 1). Each type of 100,000 step run was performed 10 times.  The left figure (a) shows the range of maxes observed for each run type. The dot is the mean across the ten trials and the bars indicate the min/max range.  The right figure (b) shows maximum number of majority-POCVAP districts achieved at each step in the chain for the best preforming short burst length ($b$ = 5), biased run ($q$ = 0.0625), and the unbiased run ($q$ = 1).  The line indicates the mean, and the colored band the min/max range, across the trials.  The short bursts runs outperform the biased runs.
    }
    \label{fig:la-maxes-all-pocvap}
\end{figure}

The nearby state of Texas has both a large Black and a large Hispanic population. We ran similar experiments as in Louisiana for the 150-seat Texas House of Representatives, but considered the BVAP (Black Voting Age Population, 11.4\% of VAP),  HVAP (Hispanic Voting Age Population, 33.6\% of VAP), and POCVAP (People of Color Voting Age Population, 50.4\% of VAP). We obtained data from similar governmental sources, allowed districts to differ in population from their ideal population by at most 2\%, and began at a random initial plan. 

For maximizing the number of majority-HVAP districts, short bursts clearly outperformed other methods; see Figure~\ref{fig:tx-maxes-all}(a-c). Short bursts also appear to outperform other methods for majority-BVAP districts, though there is too much overlap in the expected maximum curves to be sure; see Figure~\ref{fig:tx-maxes-all}(d-f). This was consistent with observations in other states: when the range of possible values for the number of majority-minority seats is too narrow, it is hard to distinguish between the methods. Texas's range of possible majority-BVAP districts is indeed quite narrow, only ranging from 1 to 6 in our experiments. This narrow range of possible values is likely a factor of both Texas's comparably small black population and the spatial distribution of these voters.

However, it makes sense to consider the Black and Hispanic populations simultaneously, rather than in separate experiments. While one could consider a multi-dimensional process where the score function is, say, the sum of the BVAP Gingles districts and the HVAP Gingles districts, this ignores the fact that Black and Hispanic populations often vote similarly and form voting coalitions. Instead, we consider the POCVAP, the total voting age population consisting of people of color; see Figure~\ref{fig:tx-maxes-all}(g-i). Using POCVAP, rather than some combination of BVAP and HVAP, gives a more accurate representation of the number of districts that have an opportunity to elect a minority candidate. For POCVAP-majority districts in Texas, some short burst experiments were able to find 94 such districts, as compared to maximums of 48 HVAP Gingles districts and 6 BVAP Gingles districts. Short bursts also outperformed other methods of maximizing the total number of POCVAP-majority districts. 
        
We briefly note that for all of these experiments, a burst length of $b = 10$ had the highest average value achieved across 10 runs. However, there was similarly strong performance for other burst lengths from 2-25, so we do not claim with statistical certainty that 10 will always be the best burst length. In all cases, there was a wide range of short burst lengths that outperformed all biased random walks. 

\begin{figure}
    \centering
    \begin{subfigure}{0.9\textwidth}
        \begin{subfigure}{0.44\textwidth}
            \centering
            \includegraphics[width=\textwidth]{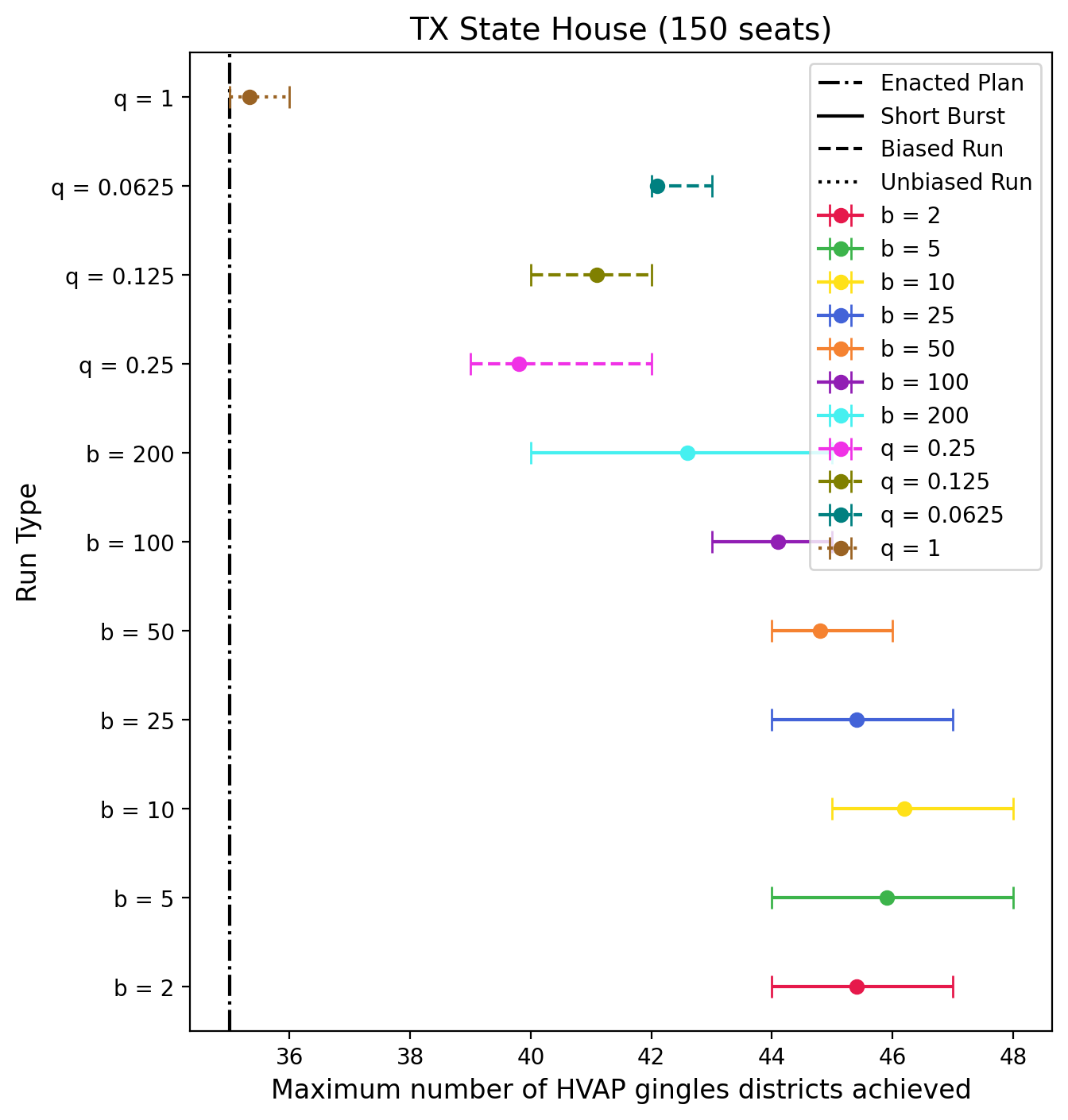}
            \caption{Maximum Gingles districts observed}
        \end{subfigure}
        \begin{subfigure}{0.55\textwidth}
            \centering
            \includegraphics[width=\textwidth]{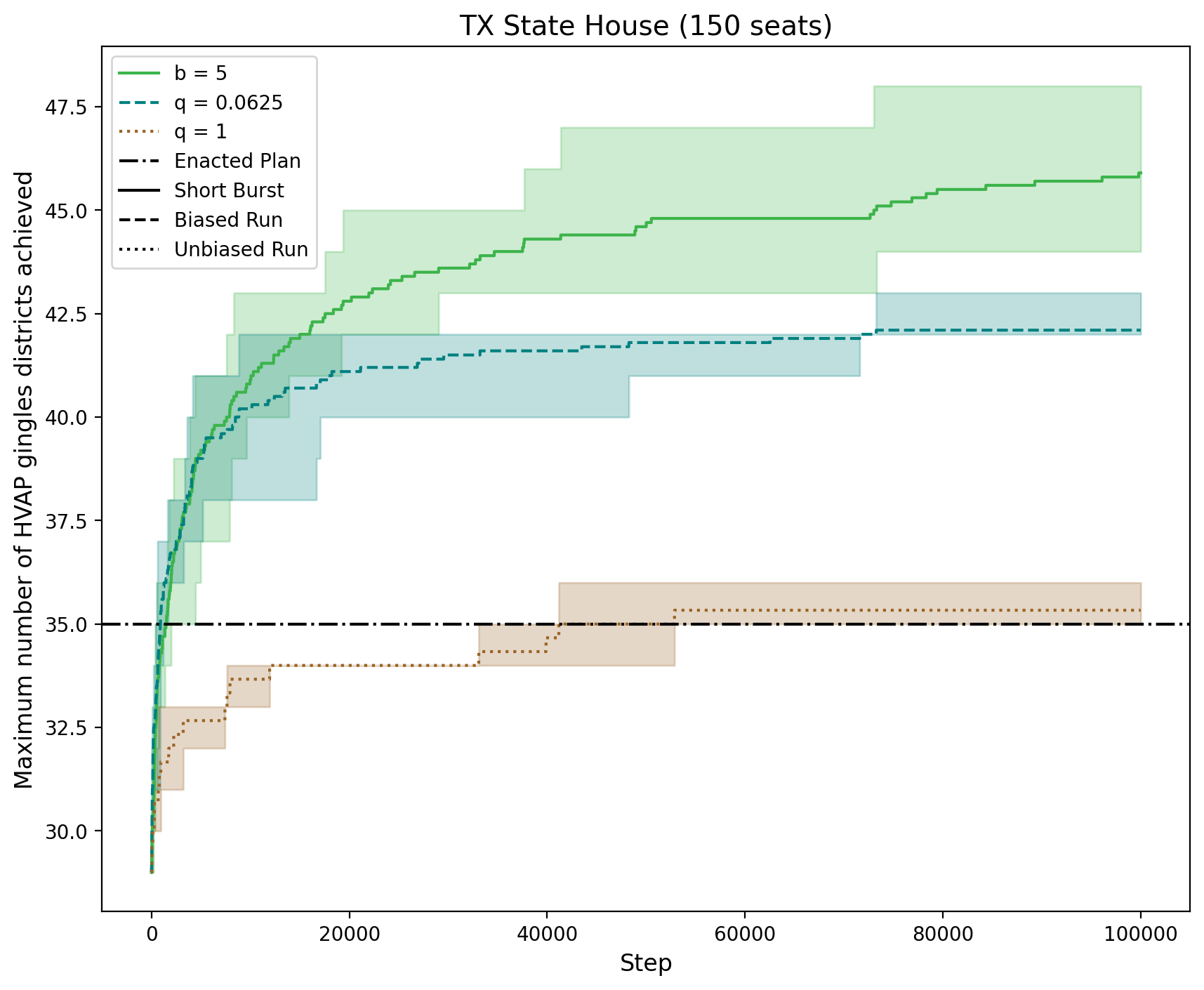}
            \caption{Maximums observed over time}
        \end{subfigure}
        \caption{HVAP}
    \end{subfigure}
    \begin{subfigure}{0.9\textwidth}
        \begin{subfigure}{0.44\textwidth}
            \centering
            \includegraphics[width=\textwidth]{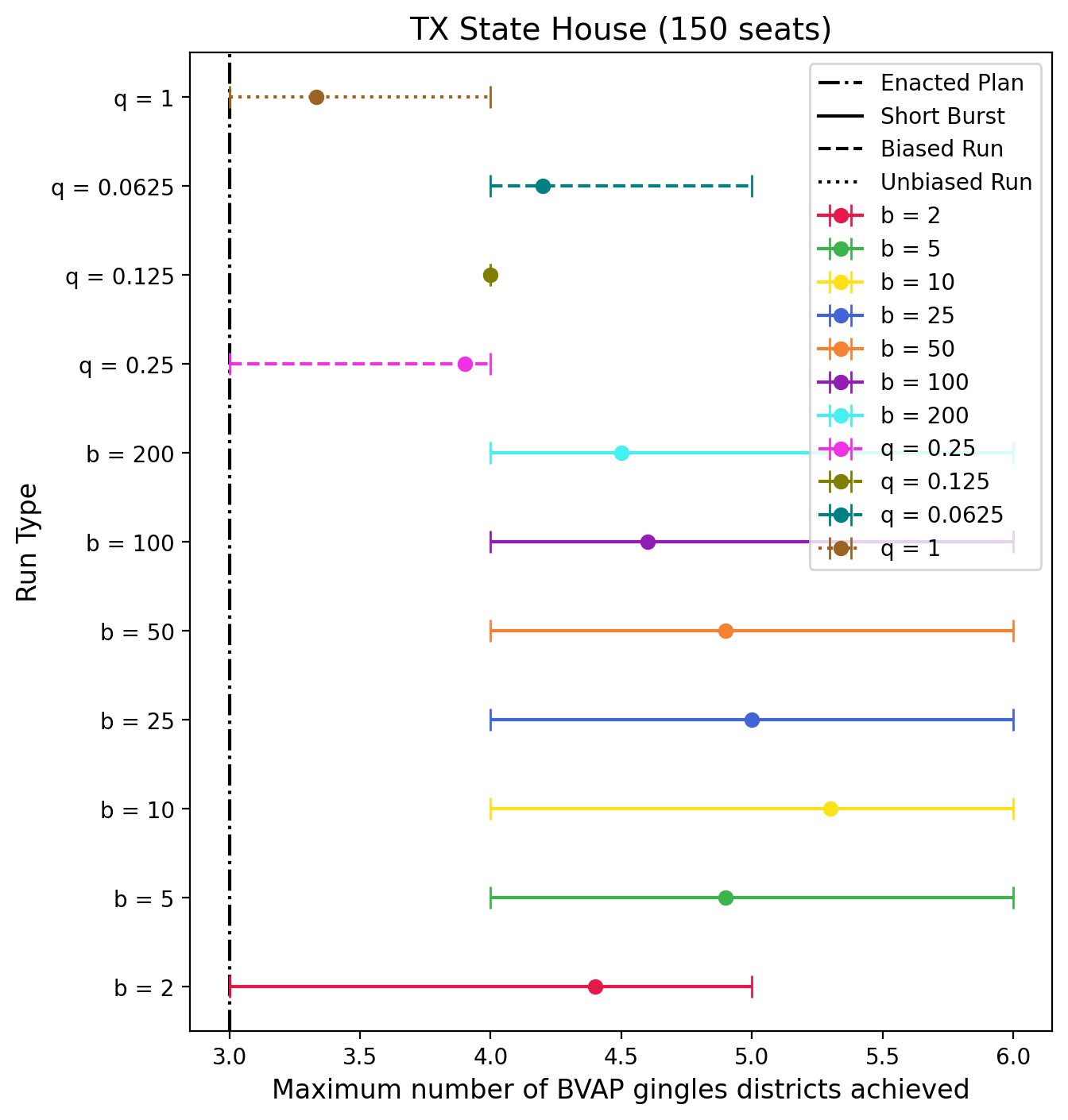}
            \caption{Maximum Gingles districts observed}
        \end{subfigure}
        \begin{subfigure}{0.55\textwidth}
            \centering
            \includegraphics[width=\textwidth]{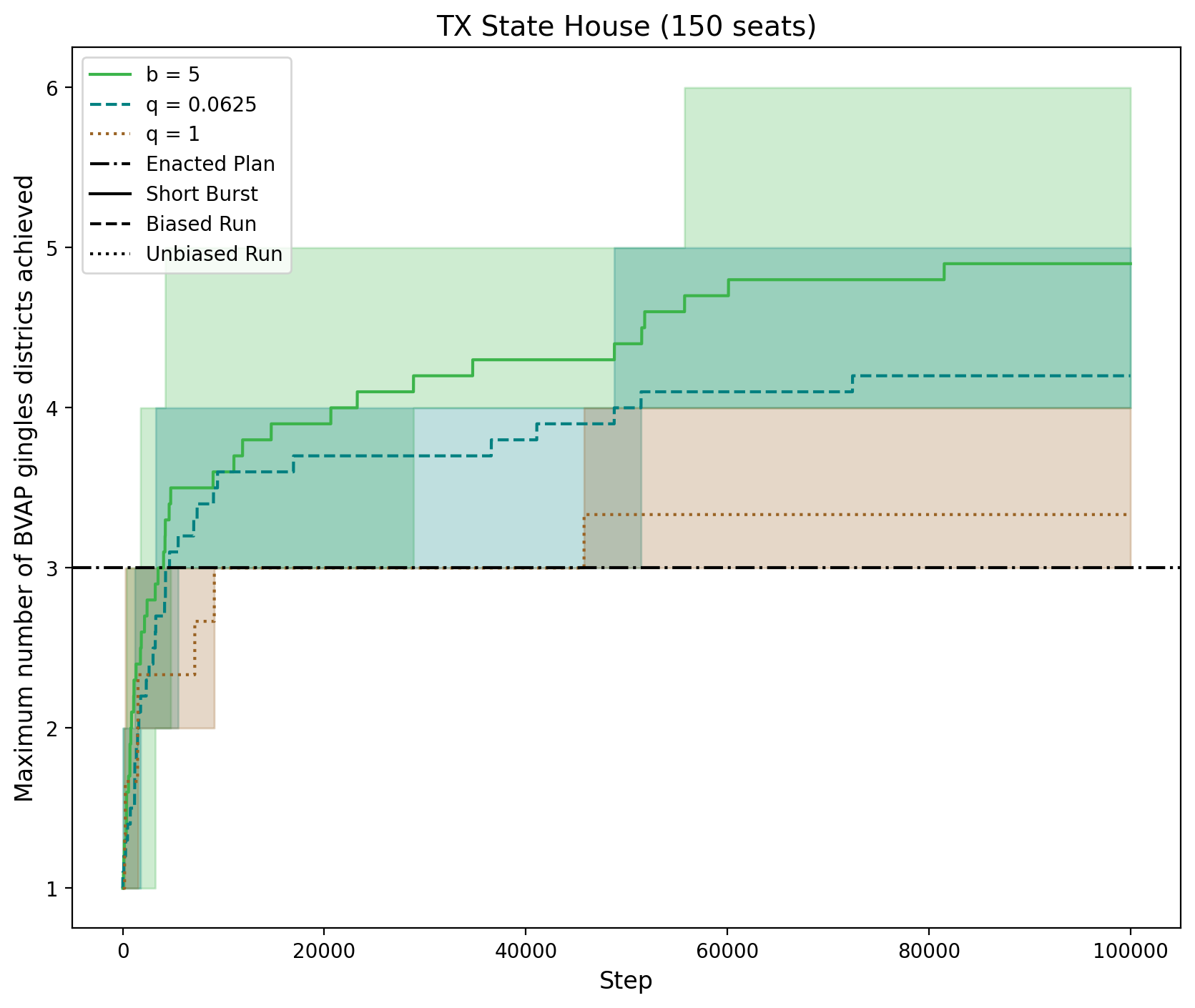}
            \caption{Maximums observed over time}
        \end{subfigure}
        \caption{BVAP}
    \end{subfigure}
    \begin{subfigure}{0.9\textwidth}
        \begin{subfigure}{0.44\textwidth}
            \centering
            \includegraphics[width=\textwidth]{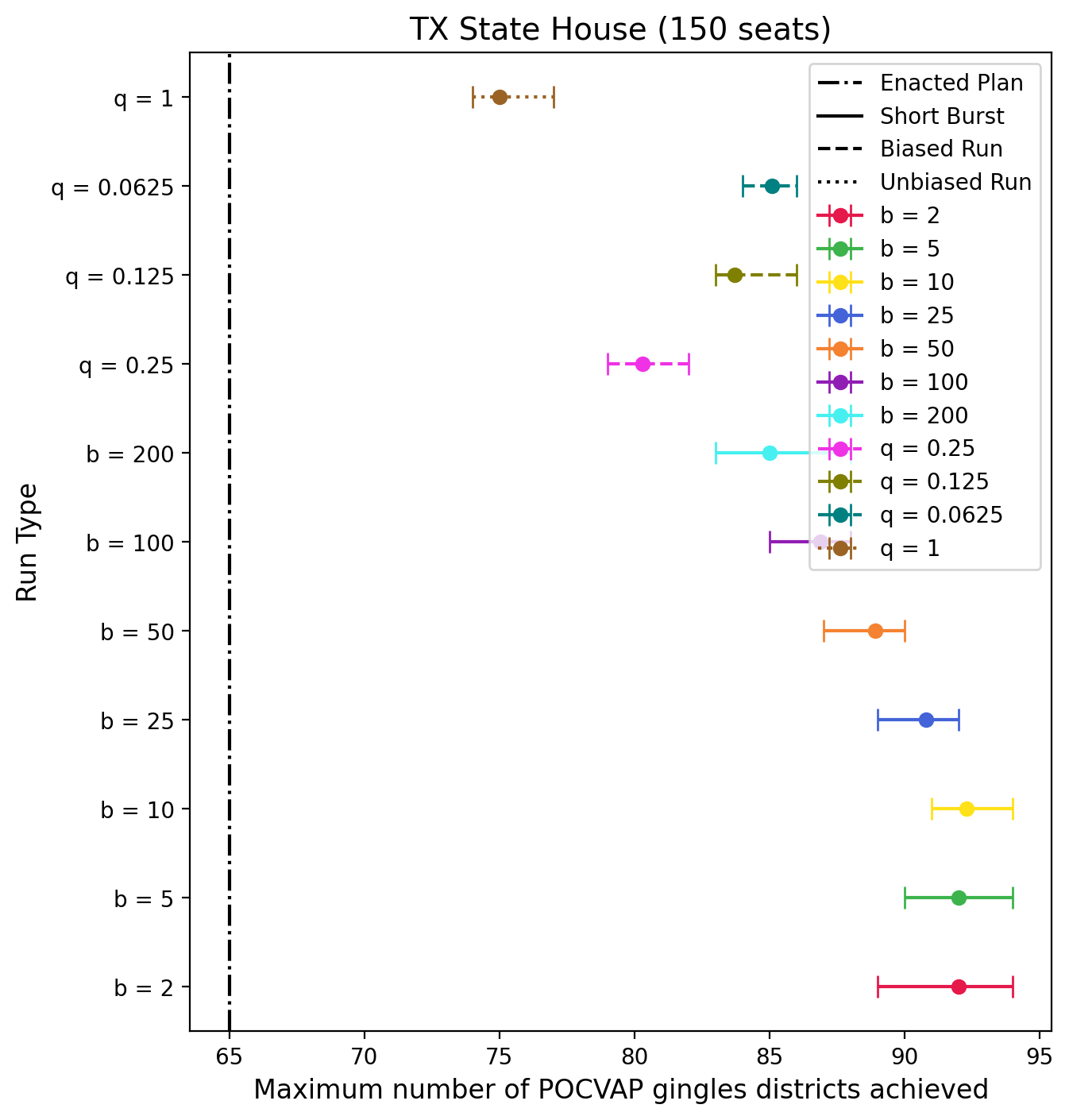}
            \caption{Maximum Gingles districts observed}
        \end{subfigure}
        \begin{subfigure}{0.55\textwidth}
            \centering
            \includegraphics[width=\textwidth]{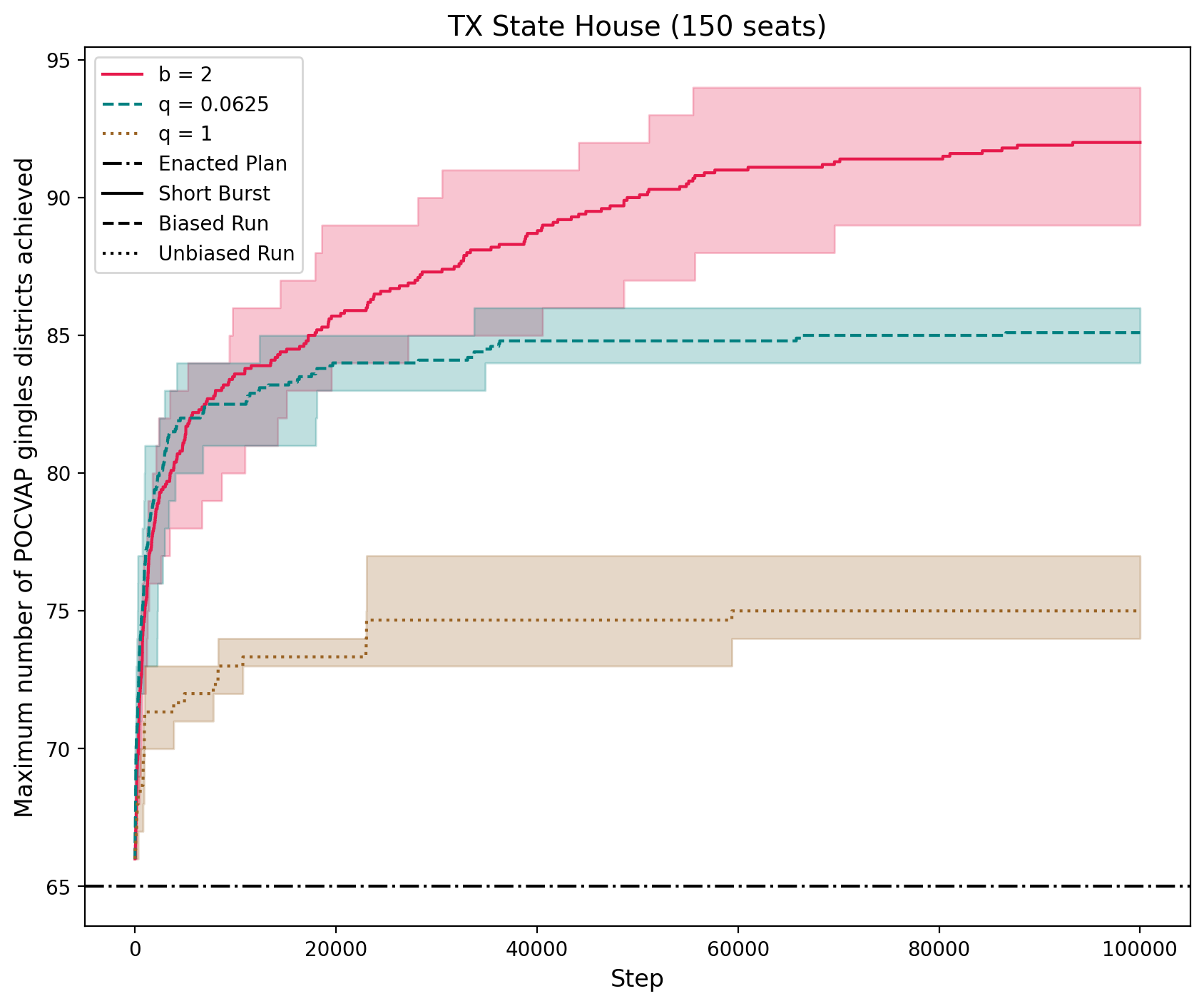}
            \caption{Maximums observed over time}
        \end{subfigure}
        \caption{POCVAP}
    \end{subfigure}

    \caption{Texas | Maximum numbers of majority-minority districts observed for short bursts, biased runs, and an unbiased run ($q$ = 1) for different minority populations. The short burst and biased 100,000 step runs were performed 10 times, and the unbiased run (q=1) was performed 3 times.  The left figures (a, d, g) show the range of maxes observed for each run type. The dot is the mean across the ten trials and the bars indicate the min/max range.  The right figures (b, e, h) show maximum number of majority-minority districts achieved at each step in the chain for the best preforming short burst length ($b$ = 5 or 2), biased run ($q$ = 0.0625), and the unbiased run ($q$ = 1).  The line indicates the mean, and the colored band the min/max range, across the trials.  The short bursts runs outperform the biased runs.
    }
    \label{fig:tx-maxes-all}
\end{figure}

For additional experiments in Virginia and New Mexico, see the appendix. In all cases, except when the population of interest was too small (that is, when the range of possible maximum values is too small), short bursts could be seen to outperform other methods.

\section{Explanatory Models} \label{sec:1D}

We have shown in Section \ref{section:empirical} that short bursts outperform biased and unbiased random walks for maximizing the number of Gingles districts in Louisiana.
In this section, we turn away from real-world data to simpler models to attempt to understand why this happens. We examine the effect of short bursts on the process of finding \oes with large scores for simple random walks in one dimension and on other uncomplicated state spaces. 

We begin by studying the simple random walk on $\mathbb{Z}$, and prove that short bursts and biased random walks are equivalent in expectation. We then examine a more realistic one-dimensional random walk where transition probabilities are not uniform, but depend on the current position: the farther from the origin a random walk is, the less likely it is to move away from the origin. Specifically, we consider a random walk on $\mathbb{Z}$ whose stationary distribution is approximately normal. In this case, short bursts do indeed outperform biased random walks. However, the effectiveness of short bursts strictly decreases as burst length increases, which is not what was observed in practice.  
Finally, we move beyond one dimension and consider random walks on a graph with a bottleneck. In this case, many burst lengths perform similarly, with some moderate-length bursts even outperforming shorter bursts.  

Altogether, these results suggest short bursts of a variety of lengths are successful for Louisiana for two reasons. First, short bursts are successful because the distribution of majority-minority districts is unimodal and approximately normal (Figure~\ref{fig:unbiased_results}, top).  Second, rather than seeing the effectiveness of bursts strictly decrease with bursts length, many short bursts lengths perform similarly well because there are small bottlenecks in the state space that the unbiased bursts must overcome -- provided the burst length is large enough to overcome these bottlenecks, we see similar results for different burst lengths.

\subsection{Short bursts for a Simple Random Walk on $\mathbb{Z}$}

In this section, we consider a simple random walk on $\mathbb{Z}$, where the score of a state is its label. This can be viewed as a one-dimensional projection of a larger Markov chains onto the (integer) score values of its states. Considering a simple random walk is equivalent to assuming that in the larger Markov chain, the scores are equally likely to increase or decrease in any given step. This is of course an over-simplification, but enables rigorous mathematical study of short bursts and a comparison to biased random walks. 
More realistic one-dimensional models are considered in the next subsection. 

\subsubsection{Rigorous Analysis of One Short Burst} \label{sec:one-unbiased-burst}

In this section, we rigorously analyze one short burst of length $b$ of a simple random walk. When we do not specify the starting point, we assume that the random walk starts at $X_0 = 0$. We are able to obtain a precise formulation for the probability distribution over the maximum achieved ($m$) in a random walk with burst of length $b$. We prove a recursive formula to describe the distribution of $m$ and show it has a closed form involving binomial coefficients. We then use that result to calculate the expected maximum value achieved by a short burst of length $b$. 

We briefly remark that $m$ is the maximum of a burst of length $b$ precisely when, in a simple random walk, the hitting time for $m$ is less than or equal to $b$ and the hitting time for $m+1$ is greater than $b$.  However, rather that working with hitting times, which would require careful consideration of conditional probabilities, we calculate the distribution of $m$ directly by setting up a recursion. 

We consider the maximum position $m$ achieved in a burst of length $b$ starting at 0. There are $2^b$ possible walks of length $b$ and we define $f(m,b)$ to be the number walks of length $b$ starting at $0$ that have $m$ as their maximum value, where $0 \leq m \leq b$. 

\begin{prop}
\label{recursive_prop}
For $m, b \in \mathbb{Z}$ and $b \geq 1$, the number $f(m, b)$ of bursts of length $b$ whose maximum  value achieved is $m$ satisfies:
$$
f(m, b) = 
\begin{cases} 
f(m-1, b-1) + f(m+1, b-1) &\mbox{if }  0 < m < b-1 \\
f(m, b-1) + f(m+1, b-1) &\mbox{if }  m = 0 \text{ and } b > 1 \\
1 &\mbox{if }  m = b-1 \text{ or } m = b \\
0 &\mbox{if }  m < 0 \text{ or } m > b
\end{cases}
$$
\end{prop}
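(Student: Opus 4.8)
The plan is to prove the recursion by a first-step decomposition: I would condition on the direction of the first step of the length-$b$ burst and relate the maximum of the remaining length-$(b-1)$ walk to the maximum of the full walk. Throughout I view a burst as a sequence of positions $X_0 = 0, X_1, \dots, X_b$ with $X_i - X_{i-1} = \pm 1$, and write $M = \max_{0 \le i \le b} X_i$ for its maximum, so that $f(m,b)$ counts the length-$b$ bursts with $M = m$.

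First I would dispose of the easy base cases. Since $X_0 = 0$ we always have $M \ge 0$, and since each of the $b$ steps increases the position by at most $1$ we have $M \le b$; this gives $f(m,b) = 0$ whenever $m < 0$ or $m > b$. The value $m = b$ forces every step to be an increase, so there is exactly one such walk and $f(b,b) = 1$. For $m = b-1$ I would use a parity observation: the position at time $t$ has the same parity as $t$, so height $b-1$ can first be reached only at a time $t \equiv b-1 \pmod 2$ with $t \ge b-1$, forcing $t = b-1$ and hence $b-1$ consecutive increases; the final step must then be a decrease, to avoid reaching $b$. This yields the single walk $({+}1)^{b-1}({-}1)$ and $f(b-1,b) = 1$.

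The heart of the argument is the recursion for the remaining cases, obtained by splitting the $2^b$ bursts according to their first step. If the first step is $+1$, then writing the subsequent $b-1$ steps as a burst $Y_0 = 0, \dots, Y_{b-1}$ starting at $0$, the positions of the full walk are $X_0 = 0$ and $X_{j+1} = 1 + Y_j$, so $M = \max(0,\, 1 + \max_j Y_j) = 1 + \max_j Y_j$ because $\max_j Y_j \ge 0$. Thus first-step-$({+}1)$ bursts with maximum $m$ are in bijection with length-$(b-1)$ bursts with maximum $m-1$, contributing $f(m-1,b-1)$. If the first step is $-1$, then $X_{j+1} = -1 + Y_j$ and $M = \max(0,\, \max_j Y_j - 1)$; here the starting height $0$ can win the maximum, which is exactly the source of the special behavior at $m = 0$.

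Finally I would read off the two recursive cases from the first-step-$({-}1)$ analysis. For $m \ge 1$ the condition $\max(0,\, \max_j Y_j - 1) = m$ is equivalent to $\max_j Y_j = m+1$, contributing $f(m+1,b-1)$ and giving $f(m,b) = f(m-1,b-1) + f(m+1,b-1)$, which covers the range $0 < m < b-1$. For $m = 0$ no first-step-$({+}1)$ burst can have maximum $0$, while a first-step-$({-}1)$ burst has maximum $0$ exactly when $\max_j Y_j \in \{0,1\}$, giving $f(0,b) = f(0,b-1) + f(1,b-1)$ (valid once $b > 1$, so that $m = 0$ does not collide with the base case $m = b-1$). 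The main thing to get right is this $m = 0$ case, where the $\max(0,\cdot)$ contributed by the origin merges two values of the sub-walk maximum into a single term; everything else is bookkeeping.
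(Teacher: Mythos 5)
Your proposal is correct and uses essentially the same argument as the paper: both proofs decompose the $2^b$ walks according to their first step and track how the maximum of the length-$(b-1)$ suffix walk shifts, with the same special handling of $m=0$ (where the origin can achieve the maximum) and direct counting for $m=b-1$, $m=b$. The only cosmetic difference is that the paper wraps this in a nominal induction on $b$ while you state it as a direct bijection, and you add a parity justification for the $m=b-1$ case that the paper merely asserts; neither changes the substance.
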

\begin{proof}
We prove this by induction on $b$, the burst length.

\underline{Base Case ($b = 1$):}
The possible bursts of length 1 are one rightward step (from $0$ to $1$) or one leftward step (from $0$ to $-1$) with maximums of 1 and 0, respectively. Using our formula, we find that $f(0, 1) = 1$ and $f(1, 1) = 1$. This is the desired result.

\underline{Inductive Step:} For $b \geq 2$, we assume $f(m, b-1)$ satisfies the recursive formula for all values of $m$. We will show that $f(m, b)$ satisfies the recursive formula for all values of $m$. We do cases for the different values of $m$. 

\underline{Case 1 ($m = 0$):}
Since $m = 0$ and $b > 1$, we want to show that $f(0, b) = f(0, b-1) + f(1, b-1)$ from the recursive formula. We will count the number of the $2^b$ possible paths that result in a maximum value of 0. If the maximum value of the path is 0, then its first step must be leftward. We now consider the possibilities for the remaining $b-1$ steps which are exactly the $2^{b-1}$ possible paths of a random walk of length $b-1$. If we add a leftward step to the beginning of all such paths whose maximum value is 0 or 1, we find a set of paths of length $b$ whose maximum value is 0. There are no other paths of length $b$ whose maximum value is 0 because paths of length $b - 1$ with a maximum value of 2 or higher result in paths of length $b$ with a maximum value of 1 or higher. This verifies our result that $f(0, b) = f(0, b-1) + f(1, b-1)$.

\underline{Case 2 ($0 < m < b - 1$):}
Since $0 < m < b - 1$, we want to show that $f(m, b) = f(m-1, b - 1) + f(m+1, b-1)$. We will count the number of length $b$ possible paths that result in a maximum value of $m$. We create simple random walks of length $b$ by taking all the paths of length $b-1$ and adding a rightward or leftward step to the beginning. This will shift the maximum position reached up or down one integer from the maximum of the original path. If we take all the paths of length $b-1$ with a maximum value of $m+1$ and add a leftward step to the beginning of the path, we shift its maximum value to $m$. Likewise, if we take the paths with maximum $m-1$ and add a rightward step to the beginning, we shift the maximum value of the new path to $m$. Thus, the number of length $b$ paths with maximum $m$ is exactly equal to the number of length $b-1$ paths with maximum $m-1$ plus the number of paths with maximum $m+1$. Thus, $f(m, b) = f(m-1, b-1) + f(m+1, b-1)$.

\underline{Case 3 ($m = b-1$):}
Since $m = b-1$, we want to show that $f(b-1, b) = 1$ from the recursive formula. The only path of length $b$ whose maximum value is $b-1$ is a path with $b-1$ rightward steps followed by a leftward step. So, $f(b-1, b) = 1$.

\underline{Case 4 ($m = b$):}
Since $m = b$, we want to show that $f(b, b) = 1$ from the recursive formula. The only path of length $b$ whose maximum value is $b$ is a path with $b$ rightward steps. So, $f(b, b) = 1$.

\underline{Case 5 ($m < 0$ or $m > b$):}
We cannot have any walks where $m < 0$ since all walks begin at 0 which establishes a lower bound on $m$. We also cannot have any walks of length $b$ where $m > b$ as the most rightward path (all rightward steps) results in $m = b$.
\end{proof}

We now use this recursive formula to show a surprising connection between $f(m, b)$ and the binomial coefficients. Although $f(m,b)$ is not precisely a binomial distribution, it is a reshuffling of that distribution.
Consider the table below of $b$-step random walks and the frequency distribution of the maximums achieved:

\begin{center}
\begin{tabular}{ |c|c|c|c|c|c|c| } 
 \hline
 \textbf{b}/\textbf{m} & \textbf{0} & \textbf{1} & \textbf{2} & \textbf{3} & \textbf{4} & \textbf{5}  \\ 
 \hline
 \textbf{1} & 1 & 1 & & & & \\ 
 \hline
 \textbf{2} & 2 & 1 & 1 & & & \\ 
 \hline
 \textbf{3} & 3 & 3 & 1 & 1 & & \\ 
 \hline
 \textbf{4} & 6 & 4 & 4 & 1 & 1 & \\ 
 \hline
 \textbf{5} & 10 & 10 & 5 & 5 & 1 & 1 \\ 
 \hline
\end{tabular}
\end{center}
These are exactly the binomial coefficients; however, they are in descending order instead of the common symmetric order found, for instance, in Pascal's triangle. We will prove this relationship rigorously in the following proposition. Note that while there are simpler ways of expressing the following, we use the formula given to denote this alternating order of binomial coefficients.  
\begin{prop} \label{prop:closedform}
The distribution of $f(m, b)$ can be written as a permutation of the binomial coefficients as follows:
\[
f(m, b) = 
\begin{cases}
      {b \choose \frac{b}{2} + (-1)^m \lceil \frac{m}{2} \rceil } & \text{if $b$ is even} \\
      {b \choose \frac{b-1}{2} + (-1)^{m-1} \lceil \frac{m}{2} \rceil } & \text{if $b$ is odd}
\end{cases}
\]
\end{prop}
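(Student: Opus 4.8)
The plan is to prove the closed form by induction on the burst length $b$, using the recursion established in Proposition~\ref{recursive_prop} and reducing every case to Pascal's identity $\binom{n}{k} = \binom{n-1}{k-1} + \binom{n-1}{k}$. It is convenient to read the claim as asserting $f(m,b) = \binom{b}{\pi_b(m)}$ for an explicit ``index map'' $\pi_b$: for even $b$, $\pi_b(m) = \tfrac{b}{2} + (-1)^m\lceil \tfrac{m}{2}\rceil$, and for odd $b$, $\pi_b(m) = \tfrac{b-1}{2} + (-1)^{m-1}\lceil\tfrac m2\rceil$. As $m$ runs through $0,1,2,\dots$, the value $\pi_b(m)$ zigzags outward from the center of row $b$ of Pascal's triangle (for instance $\tfrac b2, \tfrac b2 - 1, \tfrac b2 + 1, \tfrac b2 - 2, \dots$ when $b$ is even), so the content of the proposition is exactly that this reindexing turns the recursion for $f$ into Pascal's rule. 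Proving the statement therefore amounts to verifying the recursion term by term after substituting the closed form.

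First I would establish the base case $b = 1$: here $m$ can only be $0$ or $1$, both covered by the ``$m = b-1$ or $m = b$'' branch giving $f = 1$, and the odd formula indeed yields $\binom{1}{0} = \binom{1}{1} = 1$. For the inductive step I would assume the closed form holds at length $b-1$ for all $m$ and derive it at length $b$. The formula depends on the parity of $b$, and since $b-1$ has the opposite parity, the recursion feeds the opposite-parity formula into each term; I would therefore split the argument into the cases $b$ even and $b$ odd. A useful simplification throughout is that shifting $m$ by $\pm 1$ or $\pm 2$ in the exponent of $(-1)$ changes the sign predictably (e.g.\ $(-1)^{m-2} = (-1)^m$), which lets one pull a common factor $(-1)^m$ out of both summands.

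Within each parity of $b$, I would treat the recursion cases in turn. For the generic interior range $0 < m < b-1$ the recursion is $f(m,b) = f(m-1,b-1) + f(m+1,b-1)$, and I would subdivide once more according to the parity of $m$. Writing $m = 2j$ or $m = 2j+1$ and substituting the opposite-parity closed form, both terms on the right acquire the shape $\binom{b-1}{\,\tfrac b2 - 1 \pm (\cdots)\,}$ (for even $b$); a short computation with the ceilings shows the two lower indices are consecutive integers $k-1$ and $k$, so Pascal's rule collapses them to $\binom{b}{k}$, which matches $\pi_b(m)$. The boundary cases $m = 0$ (using the $m=0$ branch of the recursion) and $m = b-1, m = b$ (using the constant branch) would then be checked directly: for the corners one verifies $\pi_b(b) = b$ and $\pi_b(b-1) = 0$, so $\binom{b}{b} = \binom{b}{0} = 1$ as required, while the $m=0$ branch again reduces to a single instance of Pascal's identity at the center of the row.

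The main obstacle is the bookkeeping across the nested parity cases rather than any conceptual difficulty: the index $\pi_b(m)$ involves both $(-1)^m$ and $\lceil m/2\rceil$, and when $m$ is replaced by $m \pm 1$ both the sign and the ceiling jump, in a manner that depends jointly on the parities of $b$ and $m$. The crux of each case is to massage these expressions so that the two lower binomial indices coming from $f(m-1,b-1)$ and $f(m+1,b-1)$ line up as adjacent entries of row $b-1$; once that alignment is confirmed, Pascal's identity finishes the case mechanically. Keeping the four combinations of $(b \bmod 2, m \bmod 2)$ organized, and handling the three boundary branches separately, is the only delicate part of the argument.
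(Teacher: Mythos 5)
Your proposal is correct, but it takes a genuinely different inductive route from the paper's proof. You induct in steps of one, from $b-1$ to $b$: since the closed form changes with the parity of $b$, each application of the recursion in Proposition~\ref{recursive_prop} substitutes the \emph{opposite}-parity formula into $f(m-1,b-1)$ and $f(m+1,b-1)$, and a single application of Pascal's identity per case (after the four-way bookkeeping on the parities of $b$ and $m$) closes the argument. The index arithmetic you sketch does line up: for even $b$ and $m=2j$ the two lower indices become $\tfrac{b}{2}+j-1$ and $\tfrac{b}{2}+j$, collapsing to $\binom{b}{\frac{b}{2}+j}$, which matches the target index $\tfrac{b}{2}+(-1)^m\lceil\tfrac{m}{2}\rceil$; the $m=0$ branch and the corner values $\binom{b}{0}=\binom{b}{b}=1$ check out similarly, and the analogous computations for odd $b$ go through. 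The paper instead inducts in steps of two, from $b$ to $b+2$ within a fixed parity class: it expands the recursion through the intermediate level $b+1$ without ever invoking the opposite-parity closed form, obtaining expressions such as $2f(m,b)+f(m-2,b)+f(m+2,b)$, and then needs two rounds of binomial identities to reassemble $\binom{b+2}{\cdot}$; odd $b$ is dispatched by a separate, nearly identical induction that the paper only outlines. What your approach buys is a single unified induction that establishes both parity formulas simultaneously (each parity at level $b-1$ feeds the other at level $b$) and needs only one use of Pascal's rule per case; its cost is the nested parity case analysis you identify as the delicate part. The paper's approach buys freedom from ever mixing the two formulas within one computation, at the price of heavier algebra in each case and a duplicated argument for the odd case.
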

\begin{proof}
We will prove this for even $b$ and a similar methodology applies for odd $b$. We use induction on even values of $b$ to prove the result. 

\underline{Base Case:} Consider $b = 2$. We calculate 
\begin{align*}
    f(0, 2) = {2 \choose \frac{2}{2} + (-1)^0 \lceil \frac{0}{2} \rceil } = 2 \\
    f(1, 2) = {2 \choose \frac{2}{2} + (-1)^1 \lceil \frac{1}{2} \rceil } = 1 \\
    f(2, 2) = {2 \choose \frac{2}{2} + (-1)^2 \lceil \frac{2}{2} \rceil } = 1
\end{align*}
which we can verify in our table above.

\underline{Inductive Step:} 
Assume that for a fixed, arbitrary, even $b \geq 2$, $f(m, b) = {b \choose \frac{b}{2} + (-1)^m \lceil \frac{m}{2} \rceil }$ for $m$ satisfying $0 \leq m \leq b$. We will prove that $f(m, b+2) = {b + 2 \choose \frac{b+2}{2} + (-1)^m \lceil \frac{m}{2} \rceil }$ for $m$ satisfying $0 \leq m \leq b+2$ using the inductive assumption and Proposition \ref{recursive_prop}. We prove this in cases; when $m$ is close to 0 or close to $b$, care needs to be taken when applying Proposition \ref{recursive_prop}.

\underline{Case 1 ($m = 0$):}
\begin{align*}
    f(0, b + 2) &= f(0, b + 1) + f(1, b + 1) \\
    &= f(0, b) + f(1, b) + f(0, b) + f(2, b) \\
    &= 2f(0, b) + f(1, b) + f(2, b) \\
    &= 2{b \choose \frac{b}{2}} + {b \choose \frac{b}{2} - 1} + {b \choose \frac{b}{2} + 1} 
    = {b+1 \choose \frac{b}{2}} + {b+1 \choose \frac{b}{2} + 1} 
    = {b+2 \choose \frac{b+2}{2}}
\end{align*}
which is the desired result.

\underline{Case 2 ($m = 1$):}
\begin{align*}
    f(1, b + 2) &= f(0, b + 1) + f(2, b + 1) \\
    &= f(0, b) + f(1, b) + f(1, b) + f(3, b) \\
    &= f(0, b) + 2f(1, b) + f(3, b) \\
    &= {b \choose \frac{b}{2}} + 2{b \choose \frac{b}{2} - 1} + {b \choose \frac{b}{2} - 2} = {b+1 \choose \frac{b}{2}} + {b+1 \choose \frac{b}{2} - 1} = {b+2 \choose \frac{b}{2}}
\end{align*}
which is the desired result.

\underline{Case 3 ($2 \leq m \leq b - 2$):}
\begin{align*}
    f(m, b + 2) &= f(m - 1, b + 1) + f(m + 1, b + 1) \\
    &= f(m - 2, b) + f(m, b) + f(m, b) + f(m + 2, b) \\
    &= 2f(m, b) + f(m - 2, b) + f(m + 2, b) \\
    &= 2{b \choose \frac{b}{2} + (-1)^m \lceil \frac{m}{2} \rceil } + {b \choose \frac{b}{2} + (-1)^{m-2} \lceil \frac{m-2}{2} \rceil } + {b \choose \frac{b}{2} + (-1)^{m+2} \lceil \frac{m+2}{2} \rceil } \\
    &= 2{b \choose \frac{b}{2} + (-1)^m \lceil \frac{m}{2} \rceil } + {b \choose \frac{b-2}{2} + (-1)^m \lceil \frac{m}{2} \rceil } + {b \choose \frac{b+2}{2} + (-1)^m \lceil \frac{m}{2} \rceil } \\
    &= {b+1 \choose \frac{b}{2} + (-1)^m \lceil \frac{m}{2} \rceil } + {b+1 \choose \frac{b + 2}{2} + (-1)^m \lceil \frac{m}{2} \rceil } \\
    &= {b + 2 \choose \frac{b+2}{2} + (-1)^m \lceil \frac{m}{2} \rceil }
\end{align*}
which is the desired result.

\underline{Case 4 ($m = b-1$ or $b$):} 
\begin{align*}
    f(m, b + 2) &= f(m - 1, b + 1) + f(m + 1, b + 1) \\
    &= f(m - 2, b) + f(m, b) + 1 \\
    &= {b \choose \frac{b}{2} + (-1)^{m-2} \lceil \frac{m-2}{2} \rceil} + 2 = b + 2 = {b+2 \choose \frac{b+2}{2} + (-1)^m \lceil \frac{m}{2} \rceil}
\end{align*}
which is the desired result.

\underline{Case 5 ($m = b+1$ or $b+2$):}
\begin{align*}
    f(b + 1, b + 2) &= {b + 2 \choose \frac{b + 2}{2} + (-1)^{b+1} \lceil \frac{b+1}{2} \rceil } = {b + 2 \choose \frac{b + 2}{2} - \frac{b+2}{2}} = {b + 2 \choose 0} = 1 \\
    f(b + 2, b + 2) &= {b + 2 \choose \frac{b + 2}{2} + (-1)^{b + 2} \lceil \frac{b + 2}{2} \rceil } = {b + 2 \choose \frac{b+2}{2} + \frac{b+2}{2}} = {b + 2 \choose b + 2} = 1
\end{align*}
which matches our results from Proposition \ref{recursive_prop}.

The base case for the odd formula ($b = 1$) gives $f(0, 1) = {1 \choose 0} = 1$ and $f(1, 1) = {1 \choose 1} = 1$ which we can verify in our table above. A nearly identical approach using induction on the odd values of $b$ verifies the proposition in the odd case. Thus, we have proved that this permutation of binomial coefficients matches $f(m, b)$ exactly for all $b$ and all $0 \leq m \leq b$. 
\end{proof}

We now define a random variable for the maximum achieved in a burst of length $b$, and use the closed form formula of the previous proposition to calculate its expectation. 

\begin{definition} For $b \geq 1$, let $X_b$ be a random variable denoting the maximum achieved in a single short burst of length $b$ starting from 0. The distribution of $X_b$ is: 

\begin{align*}
   \p(X_b = m)  &=
    \frac{f(m, b)}{2^b} \\
    &= \frac{1}{2^b}
    \begin{cases}
      {b \choose \frac{b}{2} + (-1)^m \lceil \frac{m}{2} \rceil } & \text{if}\ b\ \text{is even} \\
      {b \choose \frac{b-1}{2} + (-1)^{m-1} \lceil \frac{m}{2} \rceil } & \text{if}\ b\ \text{is odd}
    \end{cases}
  \end{align*}
\end{definition}

\begin{prop}
\label{exp_max_prop}
For any $b \geq 1$, the expected value of the random variable $X_b$ is 
\begin{equation}
    \mathbb{E}[X_b] =  \frac{1}{2^b}
    \begin{cases}
    (b+\frac{1}{2}) {b \choose \frac{b}{2}} - 2^{b-1} & \text{if}\ b\ \text{is even} \\
    (b+1)  {b \choose \left\lfloor \frac{b}{2} \right\rfloor } - 2^{b-1}  & \text{if}\ b\ \text{is odd} \\
    \end{cases}
\end{equation}
\end{prop}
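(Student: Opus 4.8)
The plan is to evaluate $\mathbb{E}[X_b] = \sum_{m=0}^b m\,\p(X_b = m) = \frac{1}{2^b}\sum_{m=0}^b m\,f(m,b)$ directly from the closed form of Proposition~\ref{prop:closedform}, reducing the weighted sum to standard partial sums of a single row of Pascal's triangle. I would treat even $b$ in full and then indicate that the odd case is analogous, since the statement already signals this.

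First I would separate the sum by the parity of $m$, because the closed form assigns different binomial arguments to even and odd $m$. For even $b$, writing the odd indices as $m = 2k-1$ and the even indices as $m = 2k$, Proposition~\ref{prop:closedform} gives $f(2k-1,b) = \binom{b}{b/2 - k}$ and $f(2k,b) = \binom{b}{b/2 + k}$ for $k = 1,\dots,b/2$, while $m=0$ contributes nothing to the weighted sum. Using the symmetry $\binom{b}{b/2 - k} = \binom{b}{b/2 + k}$, the paired terms collapse via $(2k-1)\binom{b}{b/2-k} + 2k\binom{b}{b/2+k} = (4k-1)\binom{b}{b/2+k}$, so the whole sum becomes $\sum_{k=1}^{b/2}(4k-1)\binom{b}{b/2+k}$. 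Reindexing by $j = b/2 + k$ turns this into $\sum_{j=b/2+1}^{b}(4j - 2b - 1)\binom{b}{j}$, a weighted tail of row $b$.

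The remaining work is to evaluate two tail sums. For the weighted part I would apply the absorption identity $j\binom{b}{j} = b\binom{b-1}{j-1}$ to obtain $\sum_{j=b/2+1}^{b} j\binom{b}{j} = b\sum_{i=b/2}^{b-1}\binom{b-1}{i}$, and then invoke the symmetry of row $b-1$ (which is odd, hence split cleanly about its midpoint) to conclude this strict upper half equals $b\cdot 2^{b-2}$. For the unweighted part, $\sum_{j=b/2+1}^{b}\binom{b}{j}$ is the strict upper half of an even row, equal to $2^{b-1} - \tfrac12\binom{b}{b/2}$. Substituting both and simplifying, the $2b\cdot 2^{b-1}$ and $(2b+1)2^{b-1}$ terms cancel down to $-2^{b-1}$, leaving $\sum_m m\,f(m,b) = (b+\tfrac12)\binom{b}{b/2} - 2^{b-1}$, which gives exactly the claimed even-$b$ formula after dividing by $2^b$.

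I expect the main obstacle to be the careful bookkeeping of central terms and the parity-dependent midpoints of the binomial rows: the half-row identities depend on whether the relevant row has even or odd length, and the central coefficient $\binom{b}{b/2}$ must be tracked exactly rather than absorbed into a factor-of-two split. The odd-$b$ case requires redoing the index pairing, because for odd $b$ the ranges of $k$ for the even and odd families of $m$ are no longer symmetric; however, the same absorption identity $j\binom{b}{j} = b\binom{b-1}{j-1}$ and half-row symmetry arguments carry through, yielding the corresponding odd-$b$ expression.
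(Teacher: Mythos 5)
Your proposal is correct and takes essentially the same route as the paper's proof: both evaluate $\sum_{m} m\, f(m,b)$ directly from the closed form of Proposition~\ref{prop:closedform}, pair the odd and even values of $m$ using the symmetry $\binom{b}{b/2-k} = \binom{b}{b/2+k}$ to obtain $\sum_{k=1}^{b/2}(4k-1)\binom{b}{\cdot}$, reindex to a half-row weighted binomial sum, and finish with the absorption identity $j\binom{b}{j} = b\binom{b-1}{j-1}$ together with half-row sum identities. The only cosmetic difference is that you reindex to the upper tail of row $b$ while the paper reindexes to the lower head, which lets your row-$(b-1)$ tail evaluate cleanly to $2^{b-2}$ where the paper instead tracks the two central entries of row $b-1$ and combines them via Pascal's rule.
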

\begin{proof} We use the definition of expectation to see that 
 \begin{align*}
     \mathbb{E}[X_b] = \sum_{m=0}^{b} m \cdot \p(X_b = m) =\sum_{m=0}^{b} m \frac{f(m, b)}{2^b}
 \end{align*}
We consider the term $\sum_{m=0}^{b} m f(m, b) $.
Suppose $b$ is even; a similar methodology can be applied for odd $b$, but we omit the details. 
Using Proposition~\ref{prop:closedform}, the identity ${n \choose k} = {n \choose n - k}$, and the parity of $b$, we find that
\begin{align*}
     \sum_{m=0}^{b} m f(m, b) &= \sum_{m=0}^{b} m {b \choose \frac{b}{2} + (-1)^m \lceil \frac{m}{2} \rceil} =
    \sum_{m=0}^{b} m {b \choose \frac{b}{2} - \lceil \frac{m}{2} \rceil}
\end{align*}
Since our form includes $\lceil \frac{m}{2} \rceil$, we can group pairs of identical terms and change the index of summation as follows:
\begin{align*}
    \sum_{m=0}^{b} m {b \choose \frac{b}{2} - \lceil \frac{m}{2} \rceil} &= \sum_{i=1}^{\frac{b}{2}}\left( (2i-1) {b \choose \frac{b}{2} - i} + 2i {b \choose \frac{b}{2} - i}  \right)\\ &=
    \sum_{i=1}^{\frac{b}{2}} (4i - 1) {b \choose \frac{b}{2} - i}
\end{align*} 

Let $\ell = \frac{b}{2} - i$ and observe the following:

\begin{align*}
    \sum_{i=1}^{\frac{b}{2}} (4i - 1) {b \choose \frac{b}{2} - i } &=
    \sum_{\ell = 0}^{\frac{b}{2} - 1} \left(4\left(\frac{b}{2} - \ell\right) - 1\right) {b \choose \ell} \\
    &= (2b - 1) \sum_{\ell = 0}^{\frac{b}{2} - 1} {b \choose \ell}
     - 4\sum_{\ell = 0}^{\frac{b}{2} - 1} \ell {b \choose \ell}
\end{align*}
We then employ the common binomial identities ${n \choose k} = {n \choose n-k}$, $\sum_{k = 0}^{n} {n \choose k} = 2^n$, and $k {n \choose k} = n {n - 1 \choose k - 1}$ to find:
\begin{align*}
    (2b - 1) \sum_{\ell = 0}^{\frac{b}{2} - 1} {b \choose \ell}
     - 4\sum_{\ell = 0}^{\frac{b}{2} - 1} \ell {b \choose \ell} &= (2b-1) \left(\frac{2^b - {b \choose \frac{b}{2}}}{2}\right) 
     - 4\sum_{\ell = 0}^{\frac{b}{2} - 2} b {b-1 \choose \ell} \\
     &=(2b-1) \left(\frac{2^b - {b \choose \frac{b}{2}}}{2}\right)
     - 4b \left(\frac{2^{b-1} - {b-1 \choose \frac{b}{2} - 1} - {b-1 \choose \frac{b}{2}}}{2}\right) \\
     &= b2^b - 2^{b-1} + \left(\frac{1}{2} - b\right){b \choose \frac{b}{2}} - 4b\left(2^{b-2} - \frac{1}{2}{b \choose \frac{b}{2}}\right) \\
     &= \left(b + \frac{1}{2}\right) {b \choose \frac{b}{2}} - 2^{b-1} 
\end{align*}
So, we have found that for even $b$,
\begin{align*}
     \sum_{m=0}^{b} m f(m, b) &= \left(b + \frac{1}{2}\right) {b \choose \frac{b}{2}} - 2^{b-1} 
\end{align*}
     That is, we have shown 
\begin{align*}
      \mathbb{E}[X_b] &= \frac{1}{2^b}\left(\left(b + \frac{1}{2}\right) {b \choose \frac{b}{2}} - 2^{b-1}\right)
\end{align*}
The proof for odd $b$ is nearly identical. 

\end{proof}
We note that for a simple random walk, it does not matter where the burst starts. The expected maximum displacement to the right is the same. We demonstrate this behavior in the following lemma which we will use in the next subsection, which looks at multiple short bursts.

While we continue to use random variable $X_b$ to denote the maximum value achieved after a short bursts of length $b$, below we use random variable $X_0$ to denote the starting position of the burst, which can be thought of as the maximum achieved after $0$ steps. 

\begin{lemma}
Let $x_0$ be the initial position of the short burst. Then
\begin{equation}
\mathbb{E}[X_b|X_0 = x_0] = \mathbb{E}[X_b|X_0 = 0] + x_0
\end{equation}
\end{lemma}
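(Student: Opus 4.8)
The plan is to exploit the spatial homogeneity of the simple random walk: because the walk steps left or right with probability $1/2$ regardless of its current position, a burst started at $x_0$ is nothing but a burst started at $0$ rigidly translated by $x_0$. Concretely, I would write the walk in terms of its increments $\xi_1, \dots, \xi_b$, which are independent and each equal to $+1$ or $-1$ with probability $1/2$, crucially with a distribution that does not depend on the starting point. Conditioned on $X_0 = x_0$, the position after $t$ steps is then $S_t = x_0 + \sum_{i=1}^{t} \xi_i$.

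The key step is to observe that the maximum over the whole burst factors additively through the starting position. Since $X_b$ is the maximum of the positions attained at times $0, 1, \dots, b$, we have, conditioned on $X_0 = x_0$,
\[
X_b = \max_{0 \le t \le b}\left(x_0 + \sum_{i=1}^{t}\xi_i\right) = x_0 + \max_{0 \le t \le b} \sum_{i=1}^{t}\xi_i.
\]
The random variable $\max_{0 \le t \le b}\sum_{i=1}^t \xi_i$ is built from exactly the same increment distribution, so it is distributed as the maximum of a burst started at $0$; that is, it is a copy of $X_b$ conditioned on $X_0 = 0$. Taking expectations of both sides and applying linearity yields $\mathbb{E}[X_b \mid X_0 = x_0] = x_0 + \mathbb{E}[X_b \mid X_0 = 0]$, which is the claim.

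Equivalently, and more in keeping with the counting arguments of Propositions~\ref{recursive_prop} and~\ref{prop:closedform}, I would set up an explicit measure-preserving bijection on the $2^b$ equally likely length-$b$ paths: map each path started at $x_0$ to the path started at $0$ obtained by keeping the same sequence of left/right steps (equivalently, subtracting $x_0$ from every position). This bijection shifts the maximum of each path down by exactly $x_0$ and preserves the uniform weight $2^{-b}$, so the analogue of $f(m,b)$ counted from $x_0$ equals $f(m - x_0, b)$ counted from $0$. Summing $\sum_m m \cdot 2^{-b}$ times the corresponding counts then reproduces the same additive shift by $x_0$.

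I do not expect a genuine obstacle here, as the statement is essentially translation invariance of the walk. The only point requiring care is the bookkeeping about what the maximum ranges over: one must confirm that $X_b$ includes the starting position $X_0$ among the candidates for the maximum (as the paper's convention that $X_0$ is ``the maximum achieved after $0$ steps'' makes explicit), so that the additive shift by $x_0$ is clean and no boundary term is lost.
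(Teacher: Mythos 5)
Your proof is correct and takes essentially the same route as the paper's: both arguments come down to translation invariance of the simple random walk combined with linearity of expectation, with the paper shifting the probability mass function via the substitution $m' = m - x_0$ inside the expectation sum, while you shift the random variable itself by writing $X_b = x_0 + \max_{0 \le t \le b} \sum_{i=1}^{t}\xi_i$ before taking expectations. If anything, your pathwise/increment formulation makes fully explicit the identity $\mathbb{P}(X_b = m \mid X_0 = x_0) = \mathbb{P}(X_b = m - x_0 \mid X_0 = 0)$ that the paper asserts in one line from position-invariance of the transition probabilities.
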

\begin{proof}
The possible values for $X_b$ range from $x_0$ to $x_0+b$, so 
\begin{align*}
\mathbb{E}[X_b|X_0 = x_0] &= \sum_{m=x_0}^{x_0+b} m \cdot \p(X_b = m|X_0 = x_0)
\end{align*}
Because the transition probabilities in an unbiased short burst are invariant to the position, we can shift all probabilities to start at $X_0 = 0$. In other words, we know
\begin{align*}
\p(X_b = x_b|X_0 = x_0) = \p(X_b = x_b - x_0|X_0 = 0)
\end{align*}
This fact leads us to the following
\begin{align*}
\sum_{m=x_0}^{x_0+b} m \cdot \p(X_b = m|X_0 = x_0) &= \sum_{m=x_0}^{x_0+b} m \cdot \p(X_b = m - x_0|X_0 = 0)
\end{align*}
If we let $m' = m - x_0$, we find
\begin{align*}
\sum_{m=x_0}^{x_0+b} m \cdot \p(X_b = m - x_0|X_0 = 0) &= \sum_{m'=0}^{b} (m' + x_0) \cdot \p(X_b = m'|X_0 = 0) \\
&= \sum_{m'=0}^{b} m' \cdot \p(X_b = m'|X_0 = 0) + x_0 \sum_{m'=0}^{b} \p(X_b = m'|X_0 = 0) \\
&= \mathbb{E}[X_b|X_0 = 0] + x_0.
\end{align*}
\end{proof}

This lemma shows that short bursts according to a simple random walk are invariant to position.  Thus, the expected maximum displacement for a short burst of length $b$ is also invariant to starting position. A series of short bursts in this regime can then be described as a set of independent single short bursts whose starting point is the maximum value of the previous burst. This useful independence makes one-dimensional simple random walks a good first candidate for understanding the behavior of short bursts, but is unlikely to occur in more complicated systems. 

\subsubsection{Analysis of multiple short bursts and effects of burst length}

In this section, we compare a series of short bursts with biased and unbiased simple random walks. We see that in this one-dimensional model, short bursts are essentially equivalent to biased random walks. All of the code used for these experiments is available at \url{https://github.com/vrdi/shortbursts-gingles/tree/master/toy_models}.

\begin{figure}
    \centering
    \includegraphics[width=0.8\textwidth]{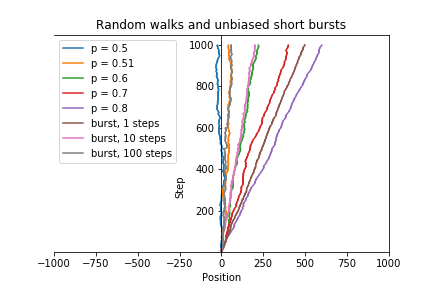}
    \caption{Comparing unbiased short burst runs of different lengths with unbiased and biased simple random walks. All of the walks are run for a total of 1000 steps shown on the y-axis. The position of these walks is shown on the x-axis.}
    \label{compare_bursts_walks}
\end{figure}

Figure \ref{compare_bursts_walks} shows walks on the integer line using short bursts, biased, and unbiased random walks. We plot one unbiased simple random walk ($p = 0.5$) and a number of biased random walks where $p$ is the probability of stepping rightward. We run each of these random walks for 1000 steps. We also plot short bursts where the product of the number of steps in the burst, $b$, and the total number of bursts, $k$, is constant at $n = 1000$. Consequently, each total short burst run is 1000 steps.

We observe that the more biased the random walk, the further it travels rightward. For short burst runs, as $b$ decreases, it travels further rightward. This behavior becomes clear when we consider the extreme run with $b = 1$ where the walk can only move to the right or stay the same. From Figure \ref{compare_bursts_walks}, it appears that biased random walks and unbiased short burst runs exhibit similar behavior. For example, the biased run with $p = 0.6$ overlaps significantly with a short burst with $b = 10$. 

We prove that for every short burst run, its expected position after a given number of steps is the same as the expected position of a biased random walk with appropriately chosen bias $p$. We note the same would be true for any random process whose expected position after $n$ steps is linear in $n$. 

\begin{prop}
\label{equivalence_biased_bursts}
Consider integers $n$ and $b$ where $b$ is a divisor of $n$. Let $D_p^{n}$ be the position of a biased random walk with bias $p$ after $n$ steps. Let $D_b^{n}$ be the position of an unbiased short burst run after $n/b$ bursts, each of length $b$. Suppose $p$ satisfies
\begin{equation}
p = \frac{\mathbb{E}[X_b]}{2b} + \frac{1}{2}
\end{equation} 
where $\mathbb{E}[X_b]$ is the expected maximum of one short burst of length b from Prop $\ref{exp_max_prop}$.

Then $\mathbb{E} [D_p^{n}] = \mathbb{E}[D_b^{n}]$.
\end{prop}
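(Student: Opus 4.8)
The plan is to compute the two expectations separately and show that each equals $\frac{n}{b}\,\mathbb{E}[X_b]$, where I write $\mathbb{E}[X_b]$ for $\mathbb{E}[X_b \mid X_0 = 0]$ as in Proposition~\ref{exp_max_prop}. The divisibility hypothesis $b \mid n$ guarantees that the short burst run consists of exactly $n/b$ complete bursts, so both processes run for precisely $n$ steps and the comparison is fair.

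For the biased random walk the computation is routine. Each of the $n$ steps contributes a displacement of $+1$ with probability $p$ and $-1$ with probability $1-p$, so its expected increment is $2p-1$, and these increments are identically distributed regardless of position. By linearity of expectation, $\mathbb{E}[D_p^n] = n(2p-1)$. Substituting the prescribed bias $p = \frac{\mathbb{E}[X_b]}{2b} + \frac{1}{2}$ gives $2p - 1 = \frac{\mathbb{E}[X_b]}{b}$, hence $\mathbb{E}[D_p^n] = \frac{n}{b}\,\mathbb{E}[X_b]$.

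For the short burst run, let $Y_i$ denote the starting position of the $i$-th burst, so that $Y_1 = 0$, the quantity $Y_{i+1}$ is the maximum value attained during burst $i$, and the final position is $D_b^n = Y_{(n/b)+1}$. The key tool is the preceding Lemma, which states $\mathbb{E}[X_b \mid X_0 = x_0] = \mathbb{E}[X_b] + x_0$. Applying it with $x_0 = Y_i$ yields the conditional relation $\mathbb{E}[Y_{i+1} \mid Y_i] = \mathbb{E}[X_b] + Y_i$. Taking expectations of both sides and invoking the tower property (law of total expectation) gives the recurrence $\mathbb{E}[Y_{i+1}] = \mathbb{E}[Y_i] + \mathbb{E}[X_b]$. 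Since $\mathbb{E}[Y_1] = 0$, a short induction yields $\mathbb{E}[Y_i] = (i-1)\,\mathbb{E}[X_b]$, and in particular $\mathbb{E}[D_b^n] = \mathbb{E}[Y_{(n/b)+1}] = \frac{n}{b}\,\mathbb{E}[X_b]$. Comparing the two computations, both sides equal $\frac{n}{b}\,\mathbb{E}[X_b]$, which is the claim.

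The main subtlety—and the only nontrivial step—lives on the short burst side: because the starting point $Y_i$ of each burst is itself a random variable depending on the outcomes of all previous bursts, I cannot simply sum deterministic increments. The position-invariance Lemma is exactly what rescues the argument, since it guarantees that, conditioned on wherever the previous burst happened to end, the expected gain of the next burst is always $\mathbb{E}[X_b]$; the tower property then converts these conditional increments into the clean additive recurrence for the unconditional expectations. Everything else reduces to linearity of expectation and the algebraic substitution for $p$.
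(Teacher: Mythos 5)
Your proof is correct and takes essentially the same route as the paper's: both sides are computed and shown to equal $\frac{n}{b}\,\mathbb{E}[X_b]$, with $\mathbb{E}[D_p^{n}] = n(2p-1)$ for the biased walk, a per-burst decomposition of the short burst run resting on the position-invariance lemma, and the algebraic substitution for $p$ at the end. The only distinction is one of rigor, in your favor: where the paper simply asserts that the $n/b$ bursts are independent and hence $\mathbb{E}[D_b^{n}] = \frac{n}{b}\,\mathbb{E}[D_b^{b}]$, you handle the random starting points explicitly via the conditional relation $\mathbb{E}[Y_{i+1} \mid Y_i] = \mathbb{E}[X_b] + Y_i$ and the tower property, making precise a step the paper leaves informal.
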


\begin{proof}
We will derive expressions for $\mathbb{E}[D_p^{n}]$ and $\mathbb{E}[D_b^{n}]$ in terms of $p$ and $b$. Let's first consider the biased random walk. Since each step in a biased random walk is independent, and the expected change in position after one step of a biased random walk is $(-1)(1-p) + (1)(p) = 2p -1$,

\begin{align*}
    \mathbb{E} [D_p^{n}] 
    = n \mathbb{E} [D_p^{1}] = n(2p - 1)
\end{align*}

Now, let's consider the unbiased short burst run. Since each of the $n/b$ short bursts are independent, we know

\begin{align*}
    \mathbb{E} [D_b^{n}] = \frac{n}{b} \mathbb{E} [D_b^{b}] 
\end{align*}

The expected position of an unbiased run after one short burst of length $b$ is identical to the expected maximum achieved over a short burst of length $b$. In other words, $\mathbb{E} [D_b^{b}] = \mathbb{E}[X_b]$. So, we know

\begin{align*}
    \mathbb{E} [D_b^{n}] = \frac{n}{b} \mathbb{E} [X_b] 
\end{align*}

Assuming the relationship in the proposition and putting this all together, we find

\begin{align*}
     \mathbb{E} [D_p^{n}] 
     &= n(2p - 1) \\
     &= n\left(2\left(\frac{\mathbb{E}[X_b]}{2b} + \frac{1}{2}\right) - 1\right) \\
     &= n\left(\frac{\mathbb{E}[X_b]}{b} + 1 - 1\right) \\
     &= \frac{n}{b} \mathbb{E} [X_b] \\
     &= \mathbb{E} [D_b^{n}]
\end{align*} \qed
\end{proof}

Proposition \ref{equivalence_biased_bursts} indicates that given a short burst run, there is a biased random walk with the same expected position after $n$ steps.
These are standard calculations to find the value $p$ such that a biased random walk has a desired expectation; for example, the same calculations are used to determine the correct value $p$ to approximate biased Brownian motion with a random walk, and the formula we obtain is similar.


We note that the shortest possible burst length ($b = 1$) is equivalent to $p = 0.75$, so in this one-dimensional setting, short burst runs cannot match biased runs with $p > 0.75$.

This indicates that simple random walks, while a logical first choice for exploration, do not help explain why short bursts outperform biased random walks as in Section~\ref{section:empirical}. A main feature of the explorations in Section~\ref{section:empirical} not captured by simple random walks is that transition probabilities may depend on score value. As we find districting plans with more majority-minority districts, redistricting plans with even more such districts are harder to find. A simple random walk, while mathematically appealing, does not capture this behavior. In the next section, we implement the short burst methodology over a more complex distribution.

\subsection{Normal Model}

In this section we move beyond a simple random walk on a line to a random walk on a line where transition probabilities depend on position: the farther away from the origin a random walk is, the less likely it is to move even farther from the origin. Specifically, we pick probabilities so that the stationary distribution of this random walk is approximately a normal distribution centered at 0. This is a better model of the transition probabilities between districts with various numbers of majority-minority districts in real state data. Figure \ref{fig:unbiased_results}(top) shows a histogram, based on an ensemble baseline in Lousiana, of the number of districts in which over 50\% of the voting age population is black. This histogram resembles a normal distribution, justifying our choice of a normal stationary distribution for the one-dimensional random walk. While of course such a projection onto one dimension hides other structural elements of the state space, a one-dimensional random walk with transition probabilities drawn to ensure the stationary distribution is normal is a better one-dimensional model than simple random walks. Unfortunately we cannot make rigorous comparisons between short bursts and biased random walks under this normal model, but we present extensive empirical investigations showing short bursts do indeed outperform biased random walks.  
\subsubsection{Random Walks according to the Normal Distribution}
\label{section:random_walks_normal_dist}
Consider a random walk $\mathcal{N}$ on the line that moves according to a normal distribution. Let $\pi$ be the stationary distribution and $P(i, j)$ represents the probability to move from state $i$ to state $j$. We will define the stationary distribution, $\pi$, using the normal probability density function as follows:

\begin{align*}
    \pi(i) = \int_{i - \frac{1}{2}}^{i + \frac{1}{2}} N(\mu, \sigma^2) dx
\end{align*}

In this case, we set $\mu$ = 0 to center our random walk at 0 and fit $\sigma$ to 1.44 from our distribution on Louisiana BVAP districts in Figure \ref{fig:unbiased_results}.

From the detailed balance conditions for a reversible random walk, we know $\pi$ and $P$ should satisfy  $\pi(i)P(i, i+1) = \pi(i+1)P(i+1, i)$. We use this to appropriately set the transition probabilities of this random walk $\mathcal{N}$, for simplicity assuming the probability of moving towards 0 from either direction is always 0.5. This leads us to the following transition probabilities for positive $i$:

\[ \begin{cases} 
      P(i, i+1) = \frac{\pi(i+1)}{2\pi(i)}\\
      P(i, i-1) = 0.5 \\
      P(i, i) = 1 - P(i, i+1) - P(i+1, i) \\
   \end{cases}
\]

Similarly for negative $i$, we find:

\[ \begin{cases} 
      P(i, i+1) = 0.5 \\
      P(i, i-1) = \frac{\pi(i-1)}{2\pi(i)} \\
      P(i, i) = 1 - P(i, i+1) - P(i+1, i) \\
   \end{cases}
\]

Finally, for $i = 0$, we find:

\[ \begin{cases} 
      P(0, 1) = \frac{\pi(1)}{2\pi(0)} \\
      P(0, -1) = \frac{\pi(-1)}{2\pi(0)} \\
      P(0, 0) = 1 - P(0, 1) - P(0, -1) \\
   \end{cases}
\]

\subsubsection{Analysis of One Short Burst for Different $X_0$}

For this random walk $\mathcal{N}$, we run simulations of one short burst on the line from different starting points of various lengths.  All of the code used for these experiments is available at \url{https://github.com/vrdi/shortbursts-gingles/tree/master/toy_models}. In Figure \ref{fig:expected_value_normal_diff_starting_points}, we compare these bursts starting at $X_0 = 0, 1$ and $2$. For each starting point, we numerically find the expected maximum value for a range of burst lengths. For each $b$ from 1 to 50, we run 1000 simulated random walks of length $b$ starting from $X_0$ and take the average maximum value achieved. Each of these curves is concave which implies that as the burst length increases, the expected maximum value is increasing less rapidly. When we compare the curves for different $X_0$, we see that as $b$ increases, the expected maximum values for different $X_0$ become closer. Additionally, the expected maximum value curves become flatter and less concave as $X_0$ increases, indicating there is less added benefit for larger short bursts. 
This is unsurprising: because the tails in a normal distribution drop off quickly, the probability of rightward steps decreases greatly as the random walk $\mathcal{N}$  goes further right. Once $\mathcal{N}$ moves far enough to the right, the history of the starting point becomes negligible and the power of the tails of the normal distribution take over.

\begin{figure}
\centering
    \includegraphics[width=0.8\textwidth, trim = 0 17 0 0, clip]{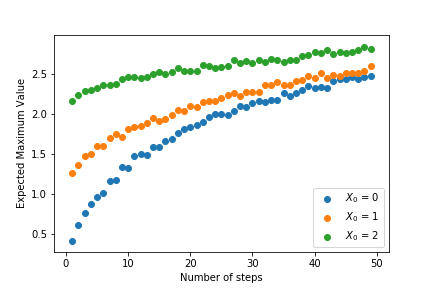}
    
    {\normalsize \fontfamily{cmss}\selectfont Number of steps in burst}
    
    \caption{Numerical estimation of the expected maximum value for a short burst whose length $b$ is the number of steps. For each data point, we use 1000 simulated short bursts of length $b$ (number of steps) starting from $X_0$ = 0, 1, or 2.} \label{fig:expected_value_normal_diff_starting_points}
\end{figure}

For random walk $\mathcal{N}$ (and in fact, in any one-dimensional model), it is better to take many short bursts with a smaller $b$ than a few short bursts with a larger $b$. This can be explained mathematically by coupling two short burst runs. For example, consider the first to be one short bursts of 50 steps of $\mathcal{N}$ and the second to be two short bursts for $\mathcal{N}$ each being 25 steps. Our coupling only requires that if the two runs are at the same position at the same time step, they make the same move in the next time step. For the first 25 steps, these two runs move in sync. Then, the second run will immediately jump to the right to the burst maximum. The second run will then always be to the right of or at the same position as the first run. The second run may be more likely to move left since it is farther from the origin and thus further into the tails of the distribution. However, the first run will never move right of the second because of the coupling. This shows that short bursts with smaller burst length $b$ will never be worse on average than short bursts with larger burst length $b$. In fact, in all observed cases, for the same number of total steps, longer burst lengths give strictly worse performance than shorter burst lengths. This is a feature of one-dimensional random walks; for an example of a graph for which many burst lengths perform similarly, see Section~\ref{sec:bottleneck}.

\subsubsection{Comparison of Short Bursts to Unbiased and Biased Random Walks}

In this section, we run simulations to compare random walk $\mathcal{N}$ with biased random walks and short bursts for the same process. The code used for these experiments is available at \url{https://github.com/vrdi/shortbursts-gingles/tree/master/toy_models}. For this biased random walk, we implement a rejection sampling algorithm just as done in Section~\ref{section:empirical}. First, we propose a move based on the probabilities of $\mathcal{N}$. If the move increases the score (position) or keeps it the same, we accept it. If the move would decrease the score, we accept it with some probability $q$ where $0 < q < 1$. If we don't accept it, we count the step as staying in the same position.

The parameter $q$ can be varied and represents the relative probability of moving left compared to the probability of moving right or staying constant.  There is a straightforward relationship between $q$ and the bias parameter, $p$, from biased simple random walks in the previous section. For biased simple random walks, $p$ is the probability of moving right and $1-p$ is the probability of moving left. Since $q$ is ratio between these two probabilities, we find $q = \frac{1-p}{p}$ or similarly, $p = \frac{1}{1+q}$. We use this relationship to equate biased simple random walks and biased random walks according to the normal distribution.

We ran simulations of a fixed total length for random walks and a variety of short burst runs for random walk $\mathcal{N}$. We run 1000 unbiased and biased random walks for 20000 steps and 1000 short burst runs each with a total length of 20000 steps (keeping the product of the burst length $b$ and the number of bursts $k$ constant at 20000). Figure \ref{fig:random_walks_normal} shows the average maximum position achieved for each of these over time for each of these. When plotting, we average over all 1000 trials and display a cone of 1 standard deviation around the mean. 

\begin{figure}
    \centering
        \includegraphics[width=0.95\textwidth, trim = 0 74 0 120, clip]{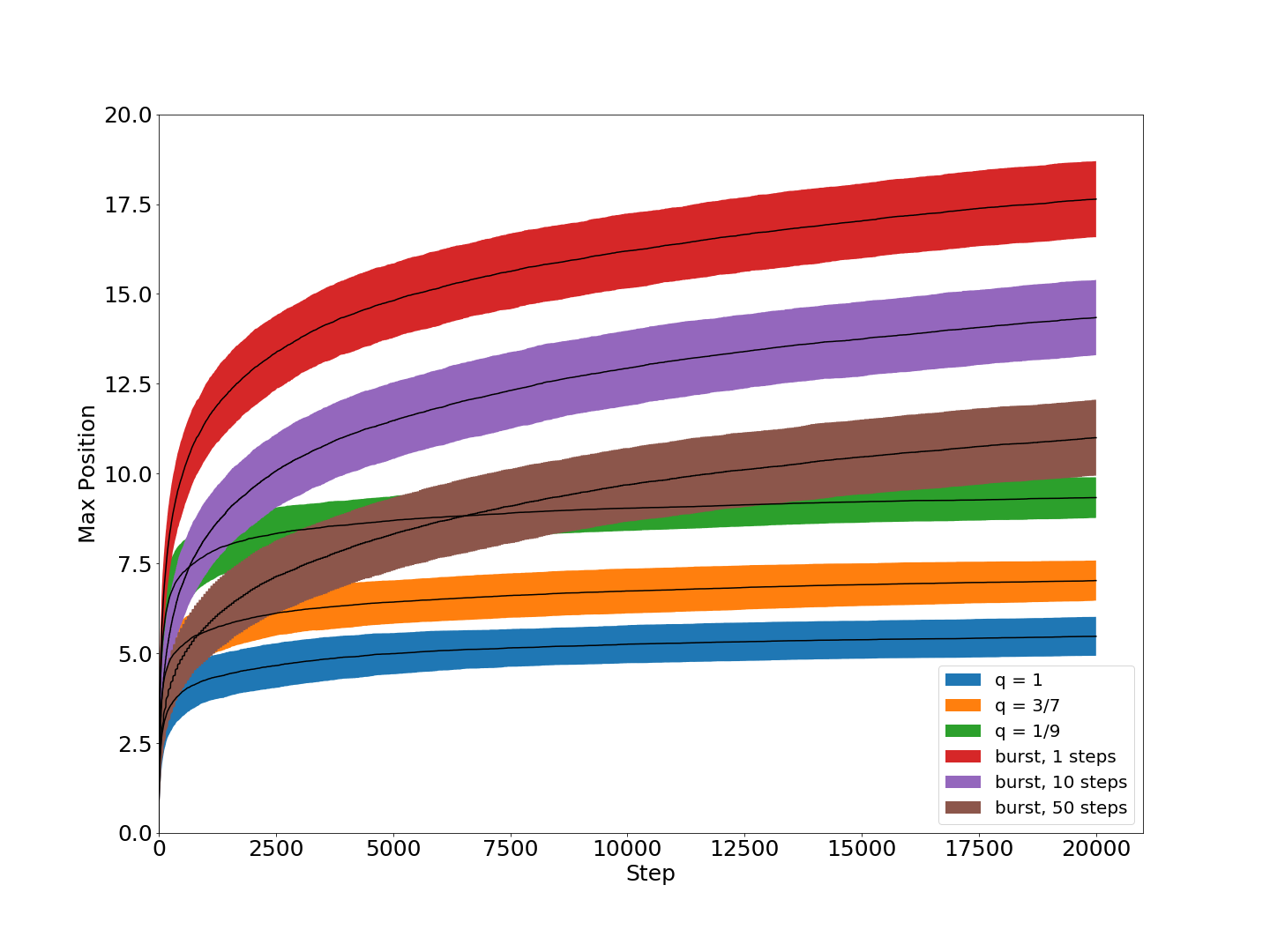}
            \caption{Comparing short bursts of various burst lengths to biased random walks (with probability $q$ of accepting a move that decreases the score function) and an unbiased random walk ($q = 1$) of Markov chain $\mathcal{N}$.  
         For each, the average maximum position achieved over 1000 independent trials is plotted; the shaded regions are cones of 1 standard deviation around the mean.  } \label{fig:random_walks_normal}

\end{figure}

We see that the short burst runs do much better than the unbiased and biased walks under the normal distribution. All of the short burst runs ($b = 1, 10, 50$) have maximum positions that are larger than any biased or unbiased random walks. Similarly to the simple case, short burst runs achieve higher maximums as the burst length decreases. However, even the short burst run with the longest burst length ($b = 50$) achieves a higher maximum position than the most biased random walk with $q = 1/9$ (or, in analogy to the previous section, $p = 0.9$). This is due to the rapidly diminishing tails of the normal probability distribution. After each short burst, the maximum value attained can never be less than the maximum of the previous short burst, whereas this is not the case for biased random walks. For short bursts, this has the effect of `pinning' the random walk far out in the tails of the normal distribution.  Thus, over time, the short bursts outperform biased random walks. This helps explain why short bursts outperform biased random walks for finding majority-minority districts, where the distribution of the number of such districts appears to be approximately normal. 

We do not believe there is anything special about a normal distribution, and that a similar effect would be found for any unimodal distribution with small tails. For instance, for a random walk on a line where probabilities of moving away from the origin decrease geometrically with the distance from the origin, a similar effect was observed; we found that after enough steps, short burst walks of varying burst lengths under a geometric distribution have maximum positions that are larger than biased or unbiased random walks.

\subsection{Beyond one dimension: Bottleneck Models} \label{sec:bottleneck}

In this section, we move beyond one-dimensional models.  Looking only at the one-dimension projection of a score function can hide other features of a state space that may be affecting how well various approaches perform.  

In the one dimensional models, we saw that shorter bursts always outperformed longer bursts, often significantly so. However, this is not what we observed in the experiments of Section~\ref{section:empirical}, where bursts of many different lengths had similarly strong performances.\footnote{Due to a slight inconsistency in notation, the bursts with $b = 2$ shown in the plots of Section~\ref{section:empirical} correspond to bursts consisting of two plans, a start plan and an end plan, which in this section have been called bursts of length $1$. These bursts are equivalent to a greedy walk which only accepts a new plan if it does not decrease the score function.} 
We believe this is specific to the line and present a scenario where many bursts of longer length perform similarly, and in fact even outperform the shortest bursts. 

In this section, we consider a graph with a bottleneck, where reaching the other side of the bottleneck -- where scores are higher -- requires first traversing a path with lower scores. On such a graph, biased walks perform poorly, as do short bursts whose burst length is shorter than or of comparable length to the length of the low-scoring path. Greedy random walks, where a new plan is only accepted if its score is the same or higher, also fail to cross this bottleneck. 

\begin{figure}
    \centering
     \includegraphics[width=0.5\textwidth]{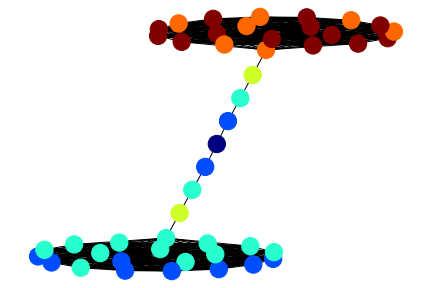}
     \caption{Bottleneck Graph: One copy of $K_{20}$ with scores randomly assigned to be $-1$ (light blue) or $-2$ (blue),  a second copy of $K_{20}$ with scores randomly assigned to be $1$ (orange) or $2$ (red), joined by a path of length 7 with scores of: 0, -1, -2, -3, -2, -1, 0.}
     \label{fig:bottleneck_graph}
\end{figure}

The bottleneck graph we consider consists of two 20-vertex complete graphs $K_{20}$ joined by a path $P_7$ of length $7$; see Fig.~\ref{fig:bottleneck_graph}
In one $K_{20}$, which we call the positive copy of $K_{20}$, each node has a score of 1 or 2. In the other $K_{20}$, which we call the negative $K_{20}$, nodes have scores of -1 or -2. The $P_7$ path has nodes with scores of $[0, -1, -2, -3, -2, -1, 0]$.  The first vertex in this path is connected to a vertex in the positive $K_{20}$ which has score 1, and the last vertex of this path is connected to a vertex in the negative $K_{20}$ with score $-1$.  Note all scores are integers and scores between adjacent vertices differ by at most one. 

 For this bottleneck graph and its associated scores, we simulated the behavior of unbiased random walks, biased random walks, short bursts, and a greedy random walk (equivalent to short bursts with $b=1$ or a biased random walk with $q=0$). We ran 100 simulations each for a fixed total length of 2400. 
 Figure \ref{fig:bottleneck_results} shows the maximum position achieved for each of these over time for each of these. Each line represents the average of 100 runs, and the shaded region is a 95\% confidence interval for that average.  The start node at the beginning of all simulations was a random node in the negative $K_{20}$ with a score of $-1$.

Figure \ref{fig:bottleneck_opts} shows that a short burst of length 20 significantly outperforms the other methods. By step 2400, all 100 runs have found the maximum possible score. Biased random walks have varying levels of success by time step 2400, where it seems that the higher the value of $q$, i.e the higher the chance of accepting a move that decreases the score, the better you are expected to perform. The greedy method using a burst length of 1, which only accepts a move to a score greater than or equal to the current score, never manages to get out of the negative $K_{20}$ and across the bottleneck.

Having established that bursts perform better than biased and unbiased random walks, Figure \ref{fig:bottleneck_bs} compares the different burst lengths with each other. Bursts of length 10, 20, 40, and 100 are able to find a state with the highest score of 2 in nearly all experiments, though medium-sized bursts of length 10, 20, and 40 seem to do so more quickly than bursts of length 100. When the burst length is shorter than the path $P_7$, the positive side of the graph is never reached. This example suggests that bottlenecks in the state space are one possible explanation for why we see short bursts of many different moderate lengths performing similarly in our case study for Louisiana rather than seeing strictly decreasing performance for longer burst lengths. In this example we also see bursts of shorter lengths (2 and 5) performing worse than shorter busts of moderate lengths.  This is something we did observe to a small extent in our experiments that was unexplained by our one-dimensional explorations, though the effect is certainly more dramatic here. One possible explanation is that there are bottlenecks of a variety of sizes within the state space, some of which short bursts of length 2 and 5 can still cross, but some of which are longer and are best traversed by short bursts of longer lengths. 

Similar effects were observed for sample graphs with more than one bottleneck. Short bursts with a burst length longer than any of the bottlenecks were almost always able to successfully cross all bottlenecks, while biased random walks, unbiased random walks, and short bursts with a burst length less that the bottleneck length were not. 

\begin{figure}
    \centering
    \begin{subfigure}{0.49\textwidth}
        \includegraphics[width=\textwidth]{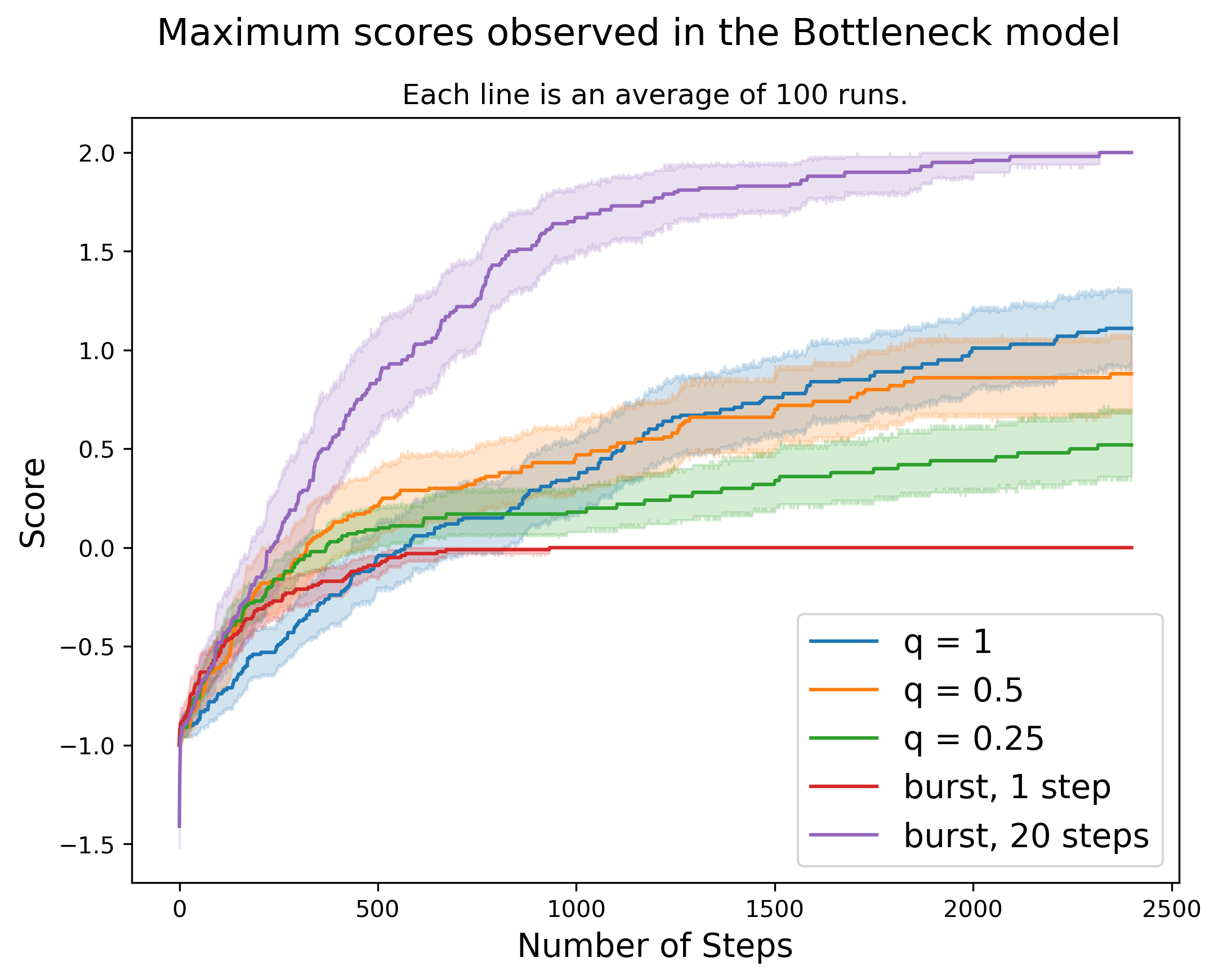}
        \caption{Short bursts vs. biased random~walks}
        \label{fig:bottleneck_opts}
   \end{subfigure}
   \centering
        \begin{subfigure}{0.49\textwidth}
        \includegraphics[width=\textwidth]{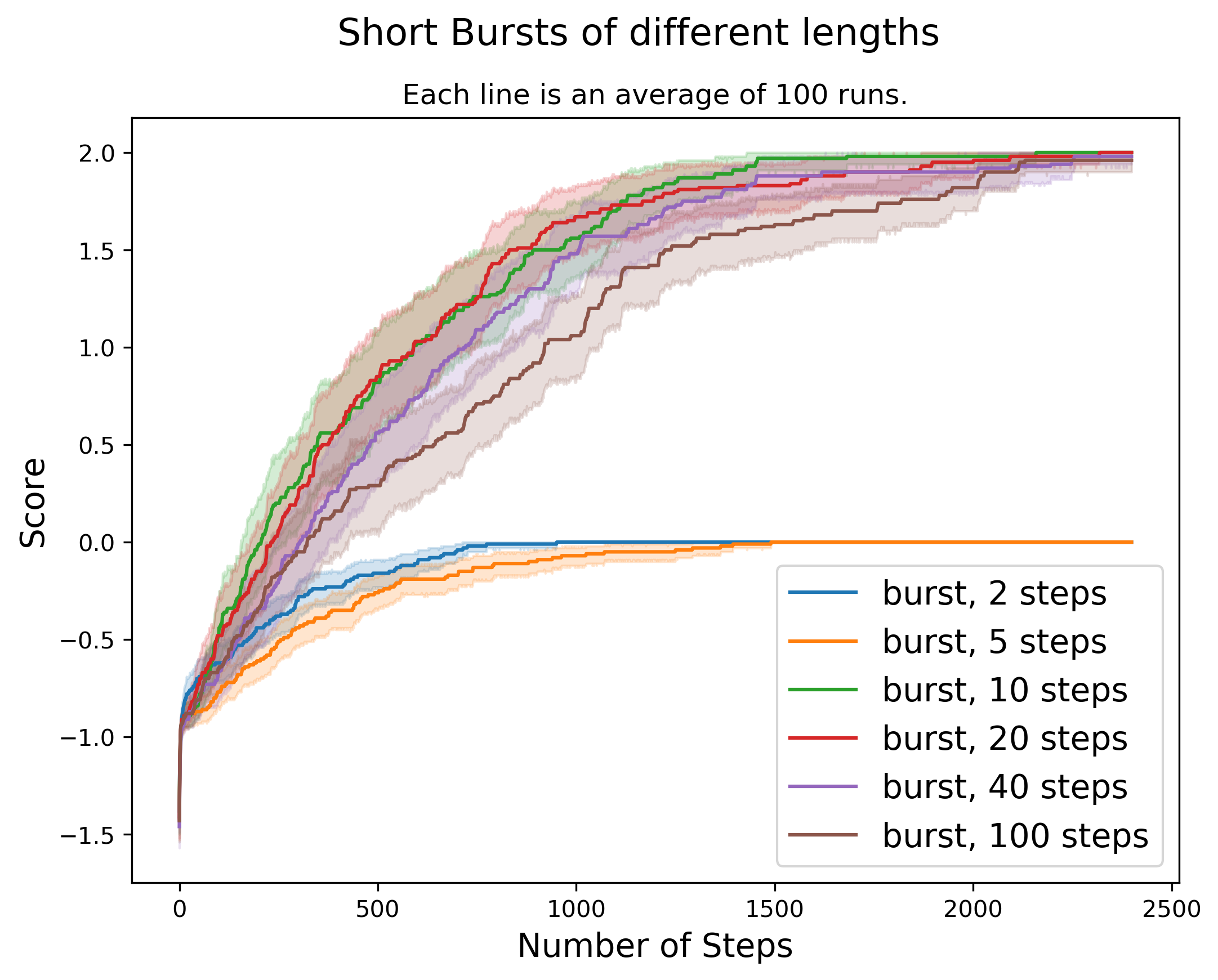}
        \caption{A comparison of different burst lengths}
        \label{fig:bottleneck_bs}
   \end{subfigure}
    
    \caption{Comparative behavior of short bursts and other optimization approaches on the $K_{20}, P_{7}$ bottleneck graph.  Each simulation was run 100 times, the lines in the plots show the average maximum values over time of those runs and the colored bars indicate a 95\% confidence interval for those averages.}
    \label{fig:bottleneck_results}
\end{figure}

\section{Conclusion}

We close with a summary of our findings, a discussion of limitations,  and ideas for future work. We have introduced an alternative to biased random walks for exploring the state space of a Markov chain to find states that are \oes with regard to some summary statistic. We have shown that this method, called {\em short bursts}, outperforms both unbiased and biased random walks in a practical use case: finding political districting plans with a large number of majority-minority districts. In this application setting, a wide range of values for the burst length parameter leads to an improvement over biased random walks, suggesting the success of the method is robust with regard to the choice of the burst length parameter. Finally, we explored simpler models, with both rigorous and empirical methods, to find possible explanations for the strong performance of short bursts in our empirical setting. 

We expect our results to generalize for any state and any minority population that is large enough and compact enough to comprise a majority in several districts. When a minority population is small enough and/or dispersed enough that it is only possible to comprise a minority in a few districts, we see less benefit of short bursts compared to biased random walks.
This observed with the black population in Texas, where BVAP only comprises 11.4\% of VAP and only 6 possible majority-BVAP districts were found (Figure~\ref{fig:tx-maxes-all}, d-f) and the  black population in Virginia, where BVAP only comprises 18.5\% of the population and only 12 majority-BVAP districts were found (Figure~\ref{fig:va-maxes-all}, a-c). In both cases, while short bursts seem to perform slightly better than biased random walks, significant overlap in the results of both methods gives us less confidence in our conclusions; further experiments would be needed to distinguish the two. 
This is the only effect we observe minority population having on our conclusions about short bursts, though it certainly may be interacting with various parameters in more subtle ways.

This work has some limitations that are worth noting. In particular, we do not yet fully understand why short bursts succeed to find political districting plans with many majority-minority districts. Though we propose one such explanation in Section \ref{sec:bottleneck}, we do not have evidence to verify it is correct. We also do not provide evidence as to the best possible choice of burst length.  Because many burst lengths perform similarly well in our experiments, more experiments beyond the proof-of-concept ones we present here would be needed to determine a statistically significant best choice of optimal burst length. More experiments could also shed light on which parameters of the problem (such as which percent of the population the minority group comprises or how the minority population is distributed throughout the state) affect the optimal burst length. However, we have observed no patterns so far and are doubtful whether any would emerge, provided the minority population is large enough to comprise a majority in a moderate number of districts. Finally, we do not consider how the method of optimizing a certain statistic via short bursts performs in other statistical problems outside the setting of political districting.

One direction of future work that will likely help us understand why short bursts work well for this problem would be to study the structure of the state space of political districting plans. Learning how districting plans are connected to each other by the moves of the Recombination Markov chain and what the energy landscape looks like will be helpful for identifying local maxima, bottlenecks, and other structural features that aid or hinder explorations by biased random walks and short bursts. 
Additional future work within the context of political districting includes studying how short bursts perform when optimizing other characteristics of districting plans: while we focus on the number of majority-minority districts, there are many other relevant statistics. For example, short bursts could be used to explore districting plans while trying to minimize the number of counties split between two districts, the compactness or competitiveness of districts, or the partisan balance of the state. It is not clear that short bursts would perform similarly well on these statistics. Exploration in this direction could also help explain why short bursts perform as well as they do for maximizing majority-minority districts and consequently how they might perform in settings unrelated to political districting.



\bibliography{biblio}

\newpage
\appendix
\section{Further Empirical Investigations} \label{app}
\normalsize

We also looked at additional states beyond Louisiana and Texas to understand the robustness of our findings. In Virginia, just like Louisiana, we looked for state House plans with many districts with majority Black voting age population (BVAP) as well as districts with a majority People of Color voting age population (POCVAP), see Figure~\ref{fig:fig:va-maxes-all}. In New Mexico, we looked for state House plans with many districts with majority Hispanic voting age population (HVAP) and POCVAP, see Figure~\ref{fig:fig:nm-maxes-all}. In our experiments on these states we used different population balances, in particular the tightest bound for which we could find a starting seed plan. For Virginia this was 2\%, and for New Mexico this was 4.5\%. Otherwise, we obtained data from similar governmental sources and ran the same experiments as for Louisiana. 

For the results of our experiment with Virginia, note there are two lines displaying enacted plans. This is because a new plan was enacted in 2019, after the 2011 Virginia House of Delegates map was found to be a racial gerrymander~\citep{bethunehill,vahousedelegates}. The 2019 plan has fewer majority-black districts, but is believed to give the black population a better opportunity to elect representatives of choice; see~\citep{virginia}. 

 Overall, these states show a similar story as seen in Louisiana, where short bursts generally outperform biased runs. However, when the range of possible values is narrow (see majority BVAP districts in VA, Figure~\ref{fig:va-maxes-all}, top), there is too much overlap in the expected maximum curves for the advantage of short bursts to be conclusive. This is the same effect observed for BVAP in Texas, see Figure~\ref{fig:tx-maxes-all}, (d)-(f). This narrow range of possible numbers of majority-minority districts is likely a combined effect of the spatial distribution of the population and the population of interest being a smaller fraction of the total population. In Virginia the BVAP accounts for 18.5\% of the total voting age population and in Texas BVAP is only 11.4\% of the total voting age population. In the rest of the plots, the minority voting age population accounts for roughly 30\% or more of the total population, and the advantage of short bursts over biased runs is clear.

\begin{figure}
    \centering
    \begin{subfigure}{\textwidth}
        \begin{subfigure}{0.44\textwidth}
            \centering
            \includegraphics[width=\textwidth]{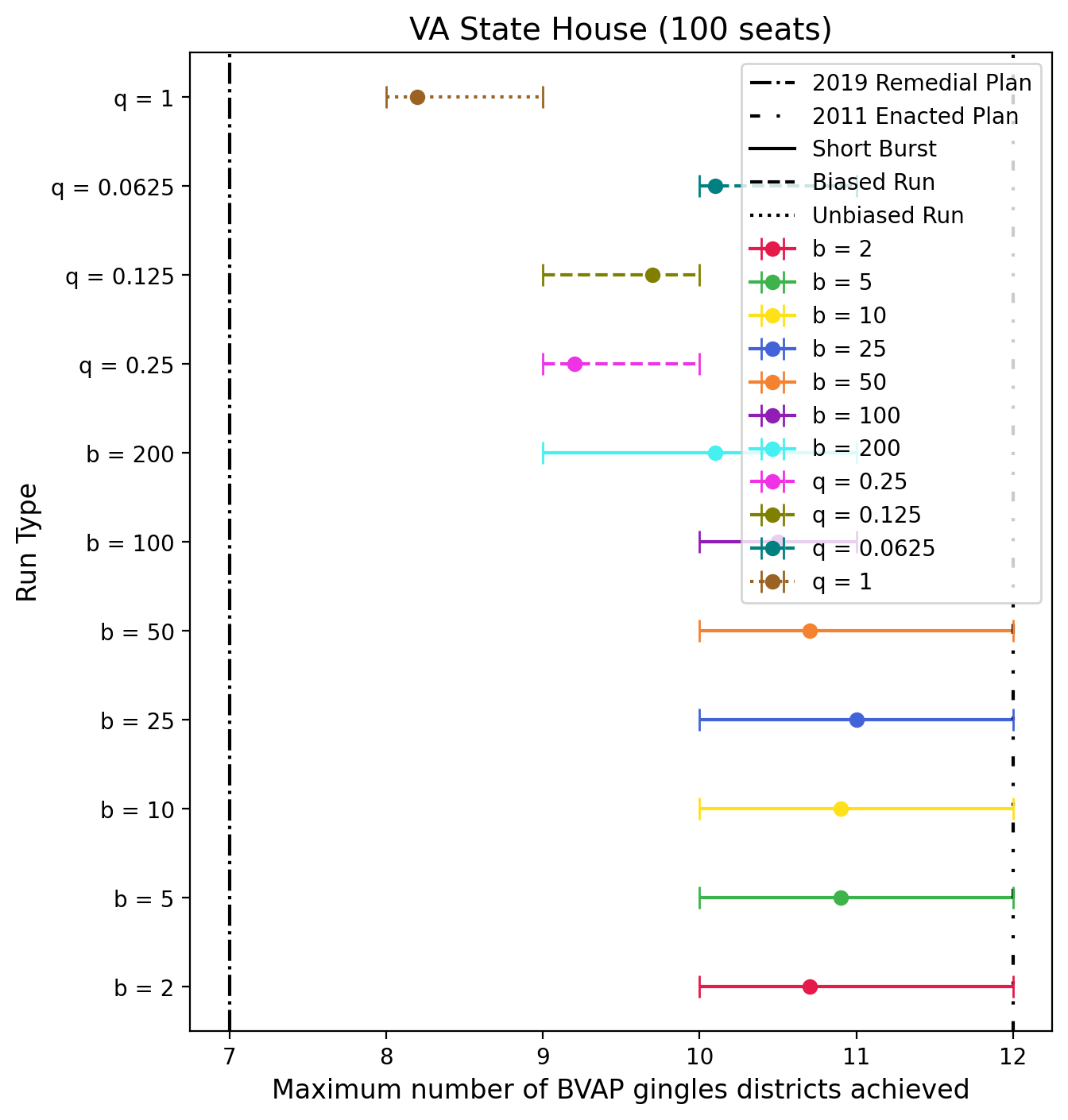}
            \caption{Maximum Gingles districts observed}
        \end{subfigure}
        \begin{subfigure}{0.55\textwidth}
            \centering
            \includegraphics[width=\textwidth]{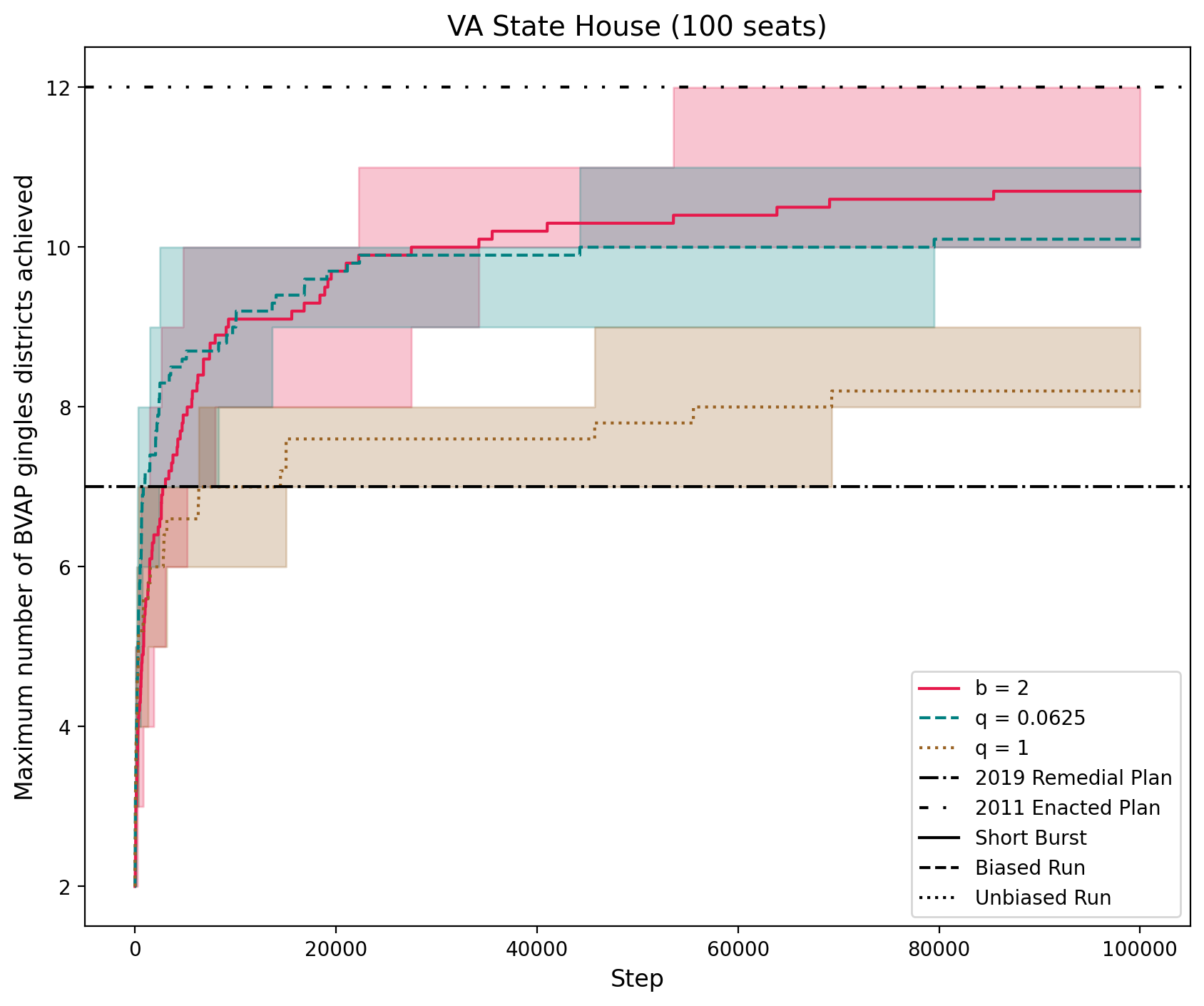}
            \caption{Maximums observed over time}
        \end{subfigure}
        \caption{BVAP}
    \end{subfigure}
    \begin{subfigure}{\textwidth}
        \begin{subfigure}{0.44\textwidth}
            \centering
            \includegraphics[width=\textwidth]{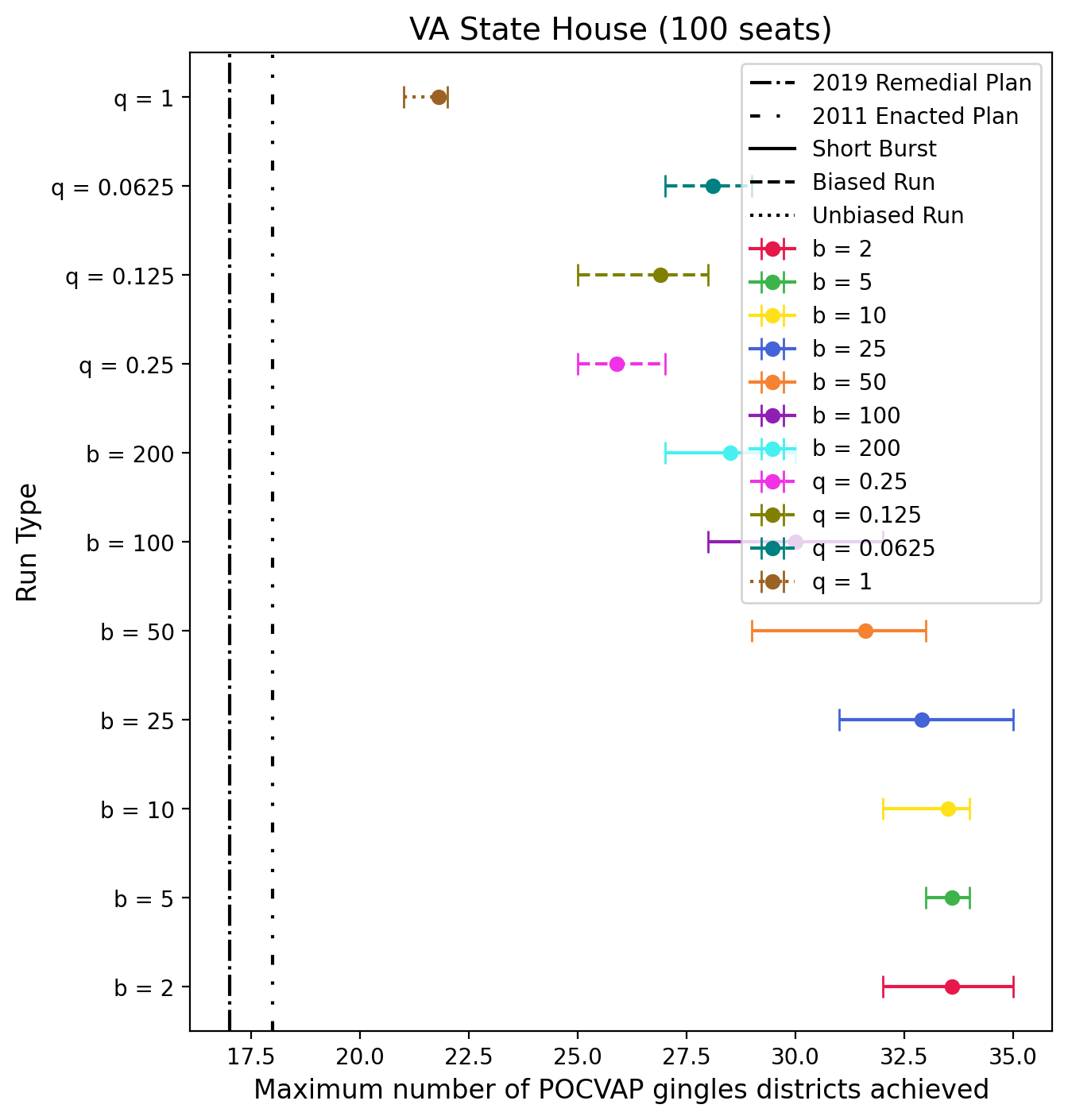}
            \caption{Maximum Gingles districts observed}
        \end{subfigure}
        \begin{subfigure}{0.55\textwidth}
            \centering
            \includegraphics[width=\textwidth]{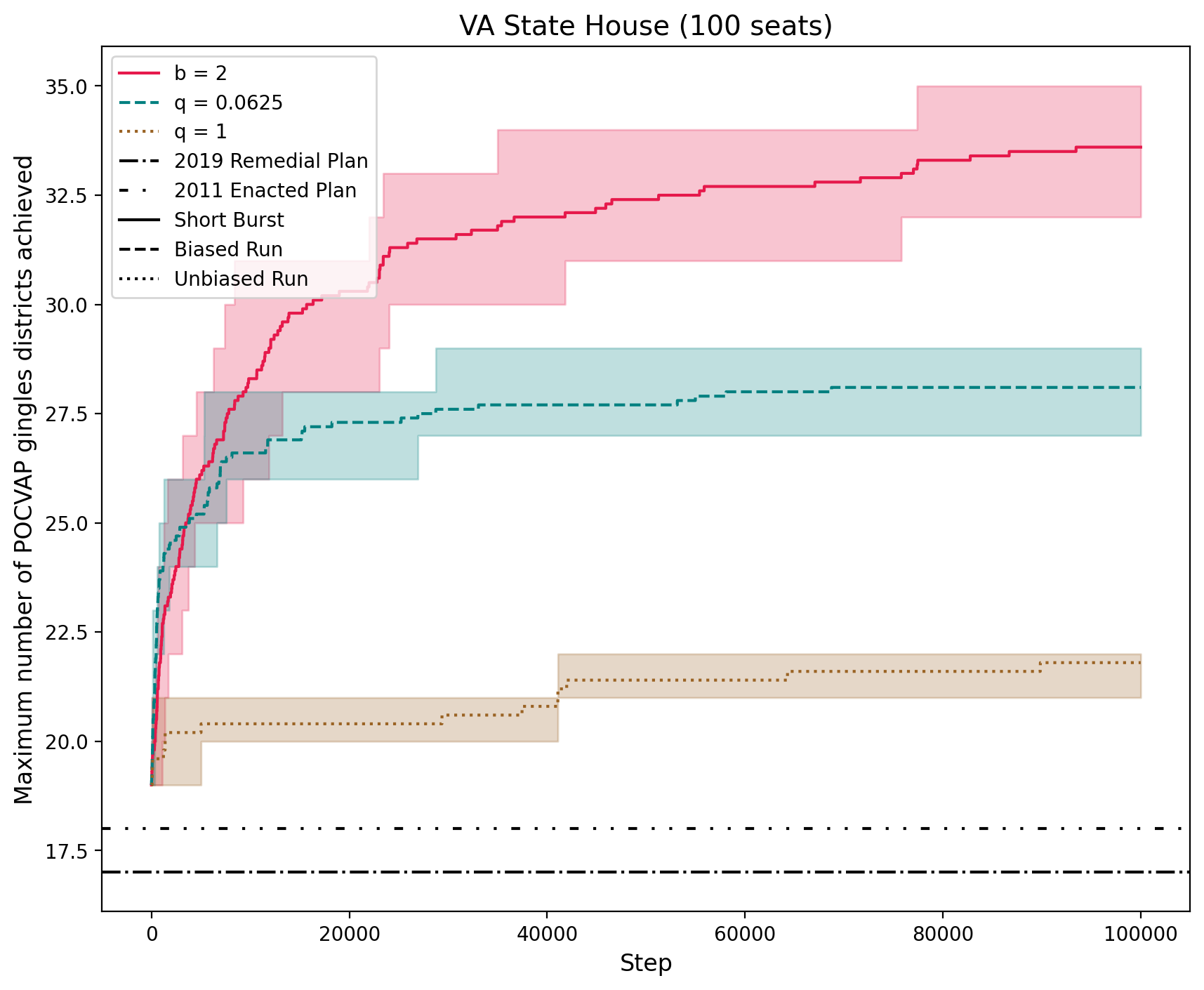}
            \caption{Maximums observed over time}
        \end{subfigure}
        \caption{POCVAP}
    \end{subfigure}

    \caption{Virginia | Maximum numbers of majority-minority districts observed for short bursts, biased runs, and an unbiased run ($q$ = 1). The short burst and biased 100,000 step runs were performed 10 times, and the unbiased run (q=1) was performed 5 times.  The left figures (a, d) show the range of maxes observed for each run type. The dot is the mean across the ten trials and the bars indicate the min/max range.  The right figures (b, e) show maximum number of majority-minority districts achieved at each step in the chain for the best preforming short burst length ($b$ = 5), biased run ($q$ = 0.0625), and the unbiased run ($q$ = 1).  The line indicates the mean, and the colored band the min/max range, across the trials.  The short bursts runs outperform the biased runs.
    }
    \label{fig:va-maxes-all}
\end{figure}

\begin{figure}
    \centering
    \begin{subfigure}{\textwidth}
        \begin{subfigure}{0.44\textwidth}
            \centering
            \includegraphics[width=\textwidth]{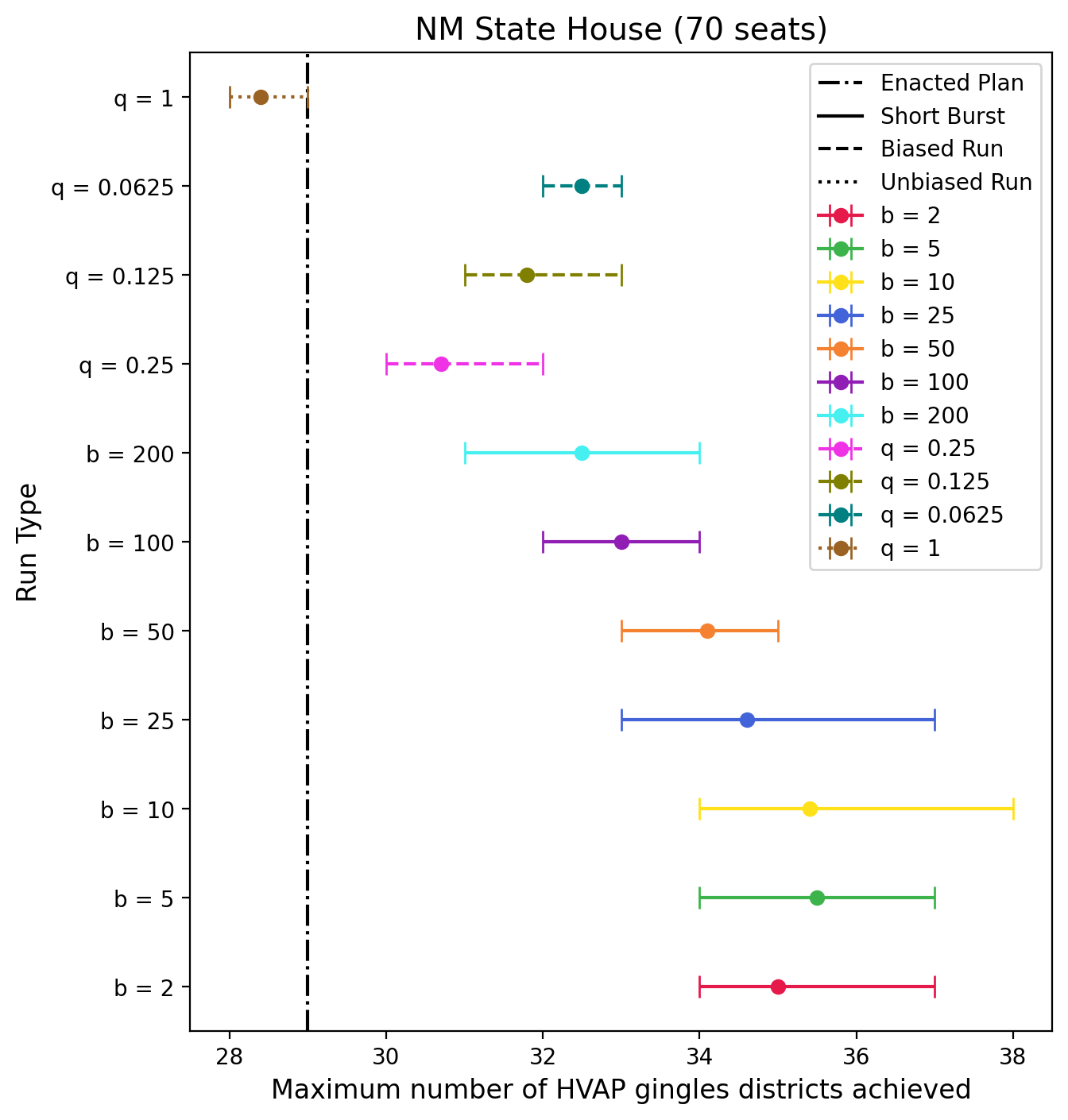}
            \caption{Maximum Gingles districts observed}
        \end{subfigure}
        \begin{subfigure}{0.55\textwidth}
            \centering
            \includegraphics[width=\textwidth]{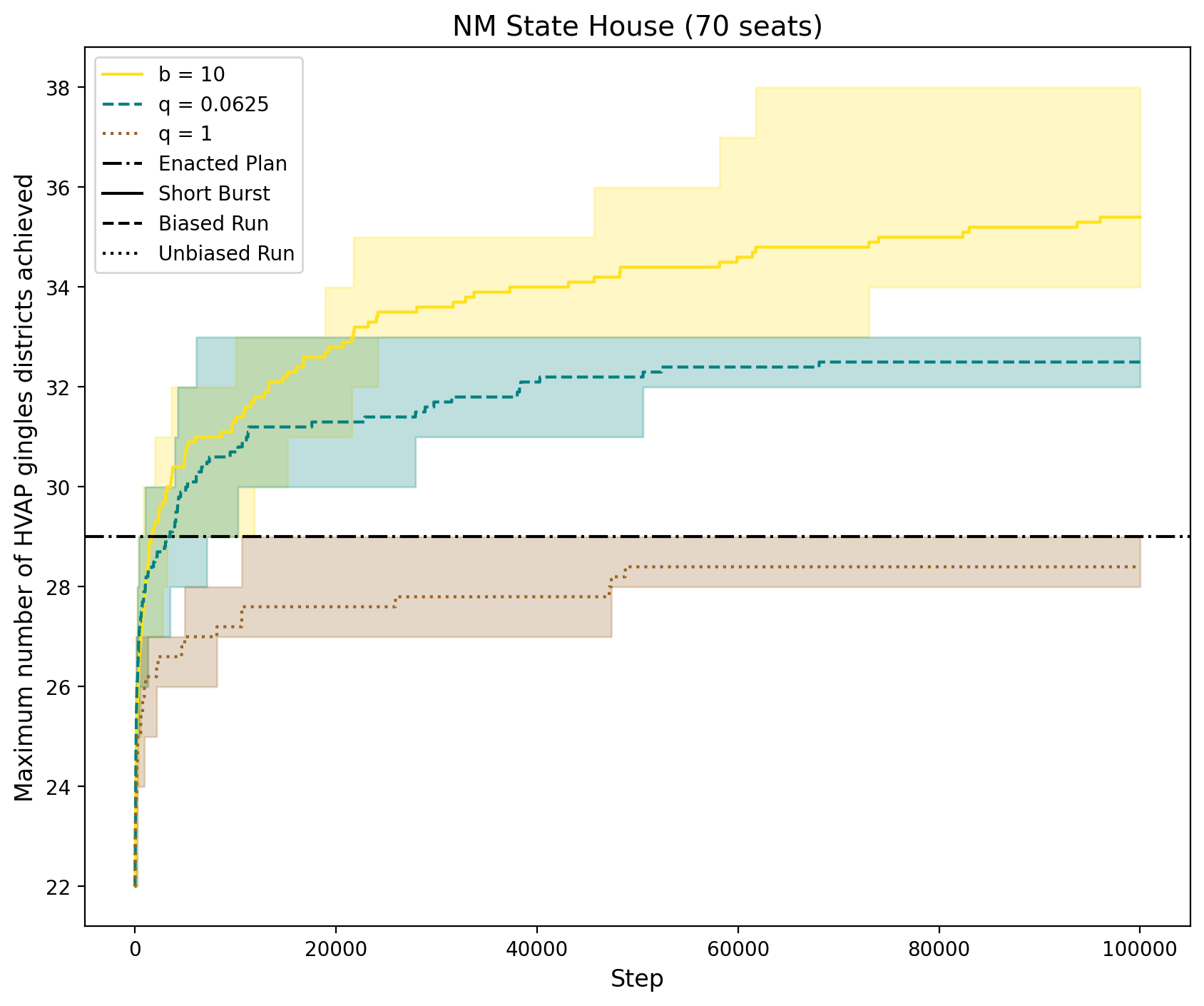}
            \caption{Maximums observed over time}
        \end{subfigure}
        \caption{HVAP}
    \end{subfigure}
    \begin{subfigure}{\textwidth}
        \begin{subfigure}{0.44\textwidth}
            \centering
            \includegraphics[width=\textwidth]{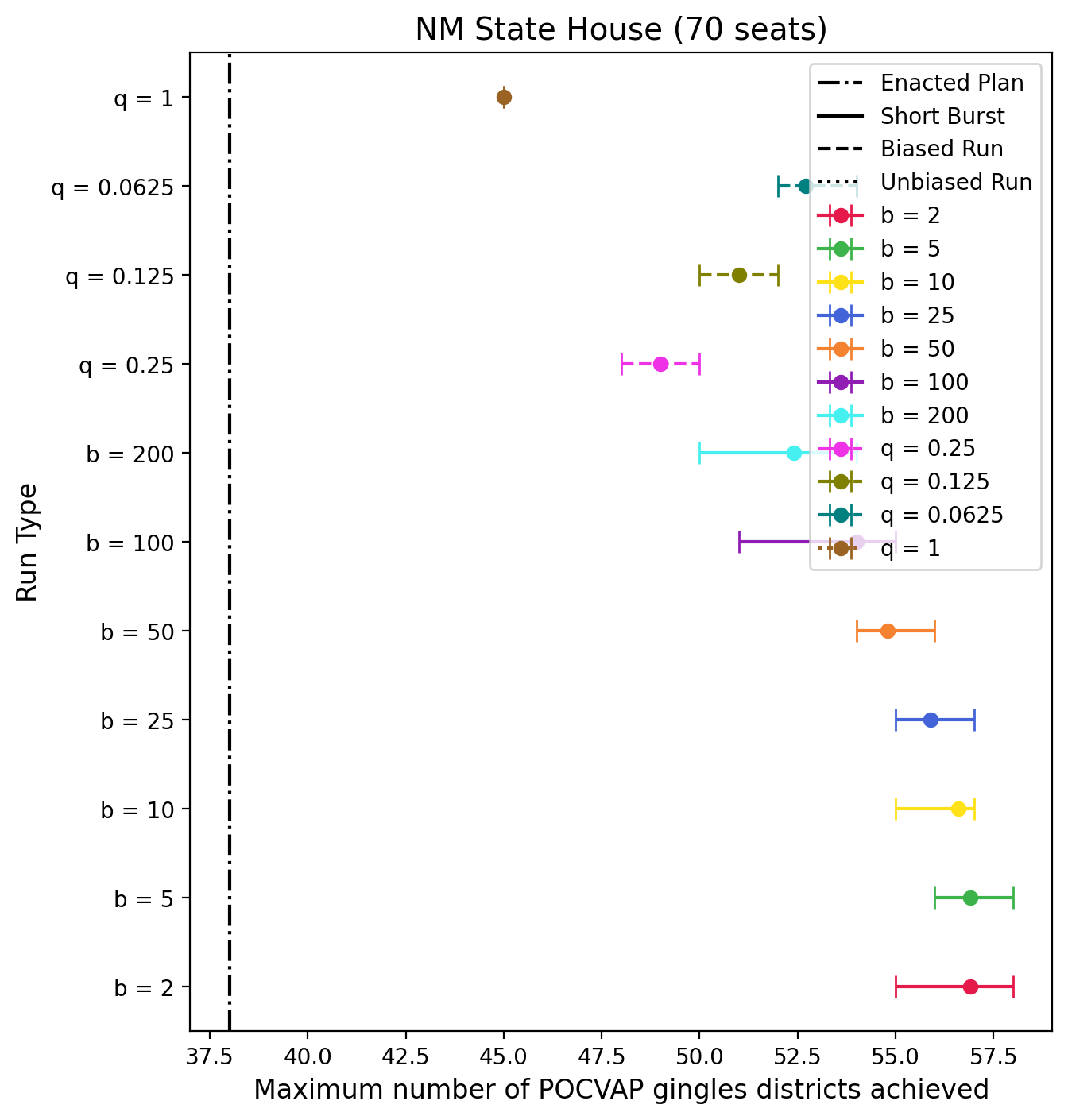}
            \caption{Maximum Gingles districts observed}
        \end{subfigure}
        \begin{subfigure}{0.55\textwidth}
            \centering
            \includegraphics[width=\textwidth]{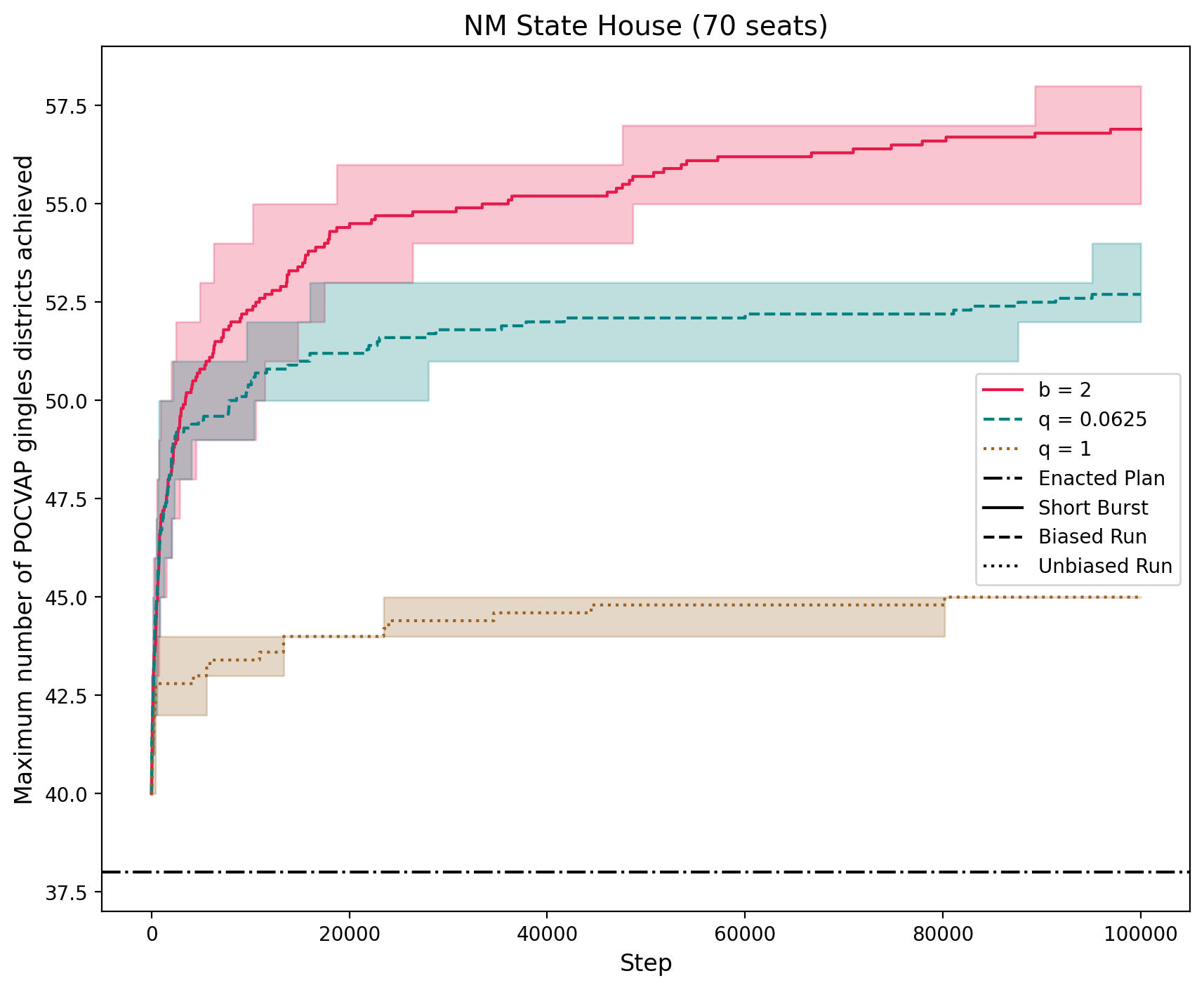}
            \caption{Maximums observed over time}
        \end{subfigure}
        \caption{POCVAP}
    \end{subfigure}

    \caption{New Mexico | Maximum numbers of majority-minority districts observed for short bursts, biased runs, and an unbiased run ($q$ = 1).  The short burst and biased 100,000 step runs were performed 10 times, and the unbiased run (q=1) was performed 5 times.  The left figures (a, d) show the range of maxes observed for each run type. The dot is the mean across the ten trials and the bars indicate the min/max range.  The right figures (b, e) show maximum number of majority-minority districts achieved at each step in the chain for the best preforming short burst length ($b$ = 5), biased run ($q$ = 0.0625), and the unbiased run ($q$ = 1).  The line indicates the mean, and the colored band the min/max range, across the trials.  The short bursts runs outperform the biased runs.
    }
    \label{fig:nm-maxes-all}
\end{figure}

    

\end{document}